\DeclareMathOperator*{\argmin}{argmin}
\begin{document}
\theoremstyle{plain}
\newtheorem{thm}{Theorem}
\newtheorem{remark}{Remark}
\newtheorem{lemma}{Lemma}
\newtheorem{prop}{Proposition}
\newtheorem{cor}{Corollary}
\theoremstyle{definition}
\newtheorem{defn}{Definition}
\newtheorem{condi}{Condition}
\newtheorem{assump}{Assumption}

\title{Newton Methods in Generalized Nash Equilibrium Problems with Applications to Game-Theoretic Model Predictive Control}

\author{Mushuang Liu,~\IEEEmembership{Member,~IEEE,} and Ilya Kolmanovsky,~\IEEEmembership{Fellow,~IEEE}

\thanks{M. Liu is with the Department of Mechanical Engineering, Virginia Tech, Blacksburg, VA, USA (mushuang@vt.edu).
}

\thanks{I. Kolmanovsky is with the Department of  Aerospace Engineering, University of Michigan, Ann Arbor, MI, USA (ilya@umich.edu).
}
\thanks{This work was supported by DARPA Young Faculty Award with the grant number D24AP00321.}
}
\maketitle

\begin{abstract}
We prove input-to-state stability (ISS) of perturbed Newton-type methods for generalized equations arising from Nash equilibrium (NE) and generalized NE (GNE) problems. This ISS property allows the use of inexact computations in equilibrium-seeking to enable fast solution tracking in dynamic systems such as in model predictive control (MPC). For NE problems, we address the local convergence of perturbed Josephy-Newton methods from the variational inequality (VI) stability analysis, and establish the ISS result under less restrictive regularity conditions compared to the existing results established for nonlinear optimization. Agent-distributed algorithms are also developed. For GNE problems, since they cannot be reduced to VI problems in general, we use semismooth Newton methods to solve the semismooth equations arising from the Karush-Kuhn-Tucker (KKT) systems of the GNE problem and establish the ISS result under a quasi-regularity condition. To illustrate the use of the ISS in dynamic systems, applications to constrained game-theoretic MPC (CG-MPC) are studied with time-distributed solution-tracking for real-time implementation. Boundness of tracking errors is proven. Numerical examples are reported. 
\end{abstract}


\section{Introduction} 
A Nash equilibrium (NE) problem \cite{nash1950equilibrium} concerns solving interdependent optimization problems subject to independent agents' constraints:
\begin{subequations}\label{NE_problem}
    \begin{align}
        \min_{a_{i}}\quad & J_i(a_i,a_{-i})\\
         s.t.\quad & a_{i}\in\mathcal{A}_i,
    \end{align}
\end{subequations}
where $i\in\{1,2,...,N\}$ represents the agent, $\mathcal{A}_i\subseteq\mathbb{R}^{n_i}$ is the feasible set of agent $i$, and $J_i:\mathbb{R}^{n}\rightarrow\mathbb{R}$ is the cost function of agent $i$. When agents have coupled constraints, a generalized NE (GNE) problem \cite{GNEP} is composed as:
\begin{subequations}\label{GNE_problem}
    \begin{align}
        \min_{a_{i}}\quad & J_i(a_i,a_{-i})\\
         s.t.\quad & a_{i}\in\mathcal{A}_i(a_{-i}),
    \end{align}
\end{subequations}
where $\mathcal{A}_i$ depends on $a_{-i}$, leading to coupled (rather than independent) agents' constraints. 

Similar to nonlinear optimization problems, the stationary conditions for the NE and GNE problems can be expressed as a generalized equation using normal cone mapping \cite{GNEP_KKT}:
\begin{equation}\label{eq:GE1}
    f(z)+\mathcal{F}(z)\ni 0,
\end{equation}
where $f$ is a function, $\mathcal{F}$ is a set-valued mapping acting between Banach spaces, and $z$ could be either the decision variable $a$ or the primal-dual variable $(a,\lambda)$ from the Karush-Kuhn-Tucker (KKT) conditions \cite{GNEP_KKT} depending on the setting of the problem. One way of solving the generalized equation \eqref{eq:GE1} is to use Newton-type methods \cite{book_pang}, e.g., Josephy-Newton method \cite{josephy_1} if $f$ is continuously Fréchet differentiable \cite{book_pang}. The Josephy-Newton method extends the classical Newton's method for solving equations $f(z)=0$ to more general variational problems \eqref{eq:GE1} with the  iterations of the form: 
\begin{equation}\label{eq:JN1}
    f(z^k)+\nabla f(z^k)(z^{k+1}-z^k)+\mathcal{F}(z^{k+1})\ni 0,
\end{equation}
where $k=0,1,...$ is the number of iterations with $z^0$ being a given starting point, and $\nabla f$ is the gradient of $f$. 
Local convergence of this Josephy-Newton method for nonlinear optimization problems has been established in the literature \cite{dontchev2013convergence}, and most of them are based on metric regularity conditions requiring that $f+\mathcal{F}$ is strongly regular or strongly subregular \cite{dontchev2013convergence,cunis2024input,dontchev2009implicit,dontchev2021lectures}. We show in this paper that similar properties can be established for generalized equations arising from NE problems \eqref{NE_problem} and that the local convergence can be achieved under relaxed regularity conditions (specifically, regularity but not strong regularity) via variational inequality (VI) stability analysis. Agent-distributed Josephy-Newton methods where each agent performs local computations for its own decision variable updates are also developed.

Perturbed Newton-type methods \cite{cunis2024input}, where the perturbations may come from an inexact evaluation of the gradient $\nabla f$ or a nonzero reminder in solving \eqref{eq:JN1}, play a critical role when analyzing the interconnection between the optimizer and a dynamic system the controller of which requires solving consecutive optimization problems such as in model predictive control (MPC) \cite{liao2020time}.  For these systems, the input-to-state stability (ISS) of the perturbed Newton method is the key property to ensure trackable optimizer performance and closed-loop system properties such as stability, robustness, and constraint satisfaction. In a recent study \cite{cunis2024input}, the ISS of perturbed Newton methods for nonlinear optimization was established under the strong regularity assumption. In this paper, we extend such results to the NE and GNE problems and show that the ISS holds under relaxed regularity conditions.

Note that for a GNE problem, its first-order necessary conditions do not reduce to a VI problem  due to the coupled agents' constraints; instead, they lead to a quasi-VI (QVI) problem \cite{GNEP}, the solution to which is not well understood yet. By using complementarity functions, the stationary conditions from coupled KKT systems can  be reformulated as a system of nonsmooth equations, which could be solved by semismooth Newton methods \cite{GNEP_newton}. However,  because the conventional VI stability analysis is not applicable (unless a variational-GNE (v-GNE) problem), it has been an open question whether the ISS holds for the perturbed semismooth Newton methods arising from the GNE problems. We answer this question by proving the ISS under suitable nonsingularity conditions with proven $Q$-quadratic local convergence. 

To illustrate the role of the ISS in closed-loop dynamic systems, we study constrained game-theoretic MPC (CG-MPC) problems \cite{CGMPC_application2}. The CG-MPC concerns solving consecutive NE or GNE problems along the state trajectory, which is computationally demanding in general. One way to reduce the computational load is to employ the time-distributed optimization (TDO) strategy by distributing the optimizer iterations over time to track approximate solutions of the optimal control problem at each time instant \cite{liao2020time}. With such an implementation strategy, the solver itself becomes a dynamic system evolving in parallel with the plant, resulting in an interconnected plant-optimizer feedback system.  We show in this paper that the appealing properties of the TDO-based MPC, including the closed-loop stability and robustness \cite{liao2020time}, also hold in the CG-MPC problem with the Newton-type methods developed in this paper.


\textit{Notation.} We use $\mathbb{R}$, $\mathbb{R}_{+}$, and  $\mathbb{R}_{++}$ to denote the set of real, non-negative real, and positive real numbers respectively, $\mathbb{N}_{+}$ and $\mathbb{N}_{++}$ to denote the set of non-negative and positive integers respectively, $\|\cdot\|$ to denote the Euclidean norm, and $\|\cdot\|_{\infty}$ to denote the infinity norm.

\section{Newton Methods for NE Problems}\label{sec:NE}

Consider the NE problem \eqref{NE_problem}.
 We use $\mathcal{G}(\mathcal{A},J)$ to represent the game, where $\mathcal{A}=\prod_{i=1}^N \mathcal{A}_i$ and $J=\{J_i\}_{i=1}^N$. Let $a=\{a_i,a_{-i}\}$ represent the set of all players' decision variables with $-i$ being the set of all other agents except for agent $i$. Note that although \eqref{NE_problem} involves constrained optimization, it is a NE (rather than a GNE) problem because $\mathcal{A}_i$ does not depend on $a_{-i}$. 
If $a^*=\{a_1^*,...,a_N^*\}$ solves \eqref{NE_problem} for all $i$, i.e.,
\begin{equation}\label{Op_agent}
a_i^*\in\argmin_{a_{i}\in\mathcal{A}_i}J_i(a_i,a_{-i}^*)
\end{equation}
holds for all $i\in\mathcal{N}$, then $a^*$ is a NE.  

We make the following assumptions throughout this section.

A1. (Smoothness): For every $i$, the cost function $J_i$ is twice continuously differentiable in its arguments, and its second-order derivatives are Lipschitz continuous;

A2. (Player-wise convexity): For every $i$  and every $a_{-i}$, the cost function $J_i(\cdot,a_{-i})$ is convex, and the set $\mathcal{A}_i$ is compact and convex.

Under A1 and A2, the problem \eqref{NE_problem} is guranteed to be solvable \cite{NE_existence}, i.e., a NE always exists.  The following theorem shows that under A1 and A2, the NE problem \eqref{NE_problem} is equivalent to a VI problem.

\begin{thm}[Corollary 3.4 in \cite{GNEP}]\label{Equivalence_VI}
Consider the NE problem \eqref{NE_problem}. Let A1 and A2 hold. A point $a^*\in\mathcal{A}$ is a NE of \eqref{NE_problem} if and only if it is a solution of the  VI$(\mathcal{A},F)$, where $F:\mathbb{R}^{n}\rightarrow\mathbb{R}^n$ is the pseudogradient of the game defined by
\begin{equation}\label{P_gradient}
   F(a)=\{\nabla_{a_i}J_i(a_i,a_{-i})\}_{i=1}^{N},
\end{equation}
and VI$(\mathcal{A},F)$ is the variational inequality. 
\end{thm}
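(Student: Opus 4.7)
The plan is to establish the two implications separately, leveraging the product structure $\mathcal{A} = \prod_{i=1}^N \mathcal{A}_i$ to connect player-wise optimality with the aggregate variational inequality.

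\textbf{Forward direction ($\Rightarrow$).} Suppose $a^*$ is a NE. Then for each $i$, $a_i^*$ is a minimizer of $J_i(\cdot, a_{-i}^*)$ on $\mathcal{A}_i$. By assumption A2, $J_i(\cdot, a_{-i}^*)$ is convex and $\mathcal{A}_i$ is convex; by assumption A1 it is continuously differentiable. The standard first-order necessary condition for constrained convex minimization then yields
\begin{equation}\nonumber
    \nabla_{a_i} J_i(a_i^*, a_{-i}^*)^T (a_i - a_i^*) \geq 0, \quad \forall a_i \in \mathcal{A}_i.
\end{equation}
Given any $a = (a_1, \ldots, a_N) \in \mathcal{A}$, summing these $N$ inequalities and recognizing $F$ from \eqref{P_gradient} gives $F(a^*)^T (a - a^*) \geq 0$, so $a^* \in \text{SOL}(\mathcal{A}, F)$.

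\textbf{Reverse direction ($\Leftarrow$).} Suppose $a^* \in \text{SOL}(\mathcal{A}, F)$. Fix any $i$ and any $a_i \in \mathcal{A}_i$, and form the test point $\tilde{a} := (a_i, a_{-i}^*)$, which lies in $\mathcal{A}$ precisely because of the Cartesian product structure. Substituting into the VI,
\begin{equation}\nonumber
    0 \leq F(a^*)^T(\tilde{a} - a^*) = \nabla_{a_i} J_i(a_i^*, a_{-i}^*)^T (a_i - a_i^*),
\end{equation}
since all coordinates other than the $i$-th cancel. Because $J_i(\cdot, a_{-i}^*)$ is convex and differentiable on the convex set $\mathcal{A}_i$, this first-order condition is also sufficient for $a_i^*$ to solve \eqref{Op_agent}. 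As $i$ was arbitrary, $a^*$ is a NE.

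\textbf{Expected obstacles.} There is no essential obstacle; the argument is a direct unpacking of definitions together with the standard equivalence between convex constrained minima and first-order variational conditions. The only ``mechanism'' needed beyond routine calculus is the product structure of $\mathcal{A}$, which is what permits the one-coordinate-at-a-time perturbation $\tilde{a} = (a_i, a_{-i}^*)$ in the reverse direction; this is also the reason the equivalence breaks for GNE problems, where varying $a_i$ alone may exit $\mathcal{A}_i(a_{-i}^*)$-like constraint sets. Given the brevity and standard nature of the result, I would likely just cite Corollary 3.4 of \cite{GNEP} rather than reproduce the argument in the main text.
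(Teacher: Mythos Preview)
Your proof is correct and is the standard argument; in fact, the paper does not supply its own proof at all but simply cites Corollary~3.4 of \cite{GNEP}, exactly as you anticipated in your closing remark. There is nothing to compare.
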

By Theorem  \ref{Equivalence_VI}, $a^*$ is a NE if and only if it solves the following generalized equation:
\begin{equation}\label{eq:GE}
    F(a)+\mathcal{N}_{\mathcal{A}}(a)\ni 0,
\end{equation}
where $\mathcal{N}_{\mathcal{A}}(a)=\{b\in\mathbb{R}^n:b^T(a'-a)\leq 0, \forall a'\in\mathcal{A}\}$ is the normal cone to $\mathcal{A}$ at $a$.

Denote the \textit{game Hessian} $H:\mathbb{R}^{n}\rightarrow\mathbb{R}^{n\times n}$ for \eqref{NE_problem} as \begin{equation}\label{Hessian}
    H(a)= \begin{bmatrix}
           \nabla^2_{a_1a_1}J_1(a) &  \nabla^2_{a_1a_2}J_1(a) & \cdots &\nabla^2_{a_1a_N}J_1(a)\\
           \nabla^2_{a_2a_1}J_2(a) &  \nabla^2_{a_2a_2}J_2(a) & \cdots &\nabla^2_{a_2a_N}J_2(a) \\
           \vdots & \vdots &\vdots &\vdots \\
           \nabla^2_{a_Na_1}J_N(a) &  \nabla^2_{a_Na_2}J_N(a) & \cdots &\nabla^2_{a_Na_N}J_N(a)
         \end{bmatrix}.
\end{equation} 
\begin{remark}
    Different from that in constrained optimization, the pseudogradient $F$ in VI$(\mathcal{A},F)$ is not integrable on $\mathcal{A}$  (see \cite[Pages 13-14]{book_pang} for discussions on integrability) because the game Hessian \eqref{Hessian} is non-symmetric in general  \cite[Theorem 1.3.1]{book_pang}, unless the game is a potential game. 
\end{remark}
Denote the critical cone of agent $i$'s optimization problem when $a_{-i}=a_{-i}^*$ (i.e., problem \eqref{Op_agent}) as
\begin{equation} 
\mathcal{C}_i(a_i^*;\mathcal{A}_i,\nabla_{a_i}J_i(\cdot,a_{-i}^*))=\mathcal{T}(\mathcal{A}_i;a_i^*)\cap \nabla_{a_i}J_i(a^*)^{\perp},
\end{equation}
where $\mathcal{T}(\mathcal{A}_i;a_i^*)$ denotes the tangent cone of the set $\mathcal{A}_i$ at a point $a_i^*\in\mathcal{A}_i$, and $\nabla_{a_i}J_i(a^*)^{\perp}$ denotes the orthogonal complement of the gradient $\nabla_{a_i}J_i(a^*)$.
Denote $\mathcal{C}(a^*;\mathcal{A},F)=\prod_{i=1}^{N} \mathcal{C}_i(a_i^*;\mathcal{A}_i,\nabla_{a_i}J_i(a^*))$ as the critical cone of the game.

The game Hessian $H(a^*)$ is strictly semicopositive on the cone $\mathcal{C}(a^*;\mathcal{A},F)$ if for every nonzero vector $a\in\mathcal{C}(a^*;\mathcal{A},F)$,
\begin{equation}
    \max_{1\leq i\leq N} \quad a_i^T\sum_{j=1}^N H_{ij}a_j>0,
\end{equation}
where $H_{ij}=\nabla^2_{a_ia_j}J_i(a^*)$.
This strictly semicopositive property is important in the NE local uniqueness and stability analysis as discussed in the next subsection. 

\subsection{Stability of NE}
In terms of the stability of an NE $a^*$ of the game $\mathcal{G}(\mathcal{A},J)$, we are interested in (i) when $a^*$ is an isolated (i.e., locally unique) NE, and (ii) whether the ``nearby games" of $\mathcal{G}(\mathcal{A},J)$ always have a NE that is ``close" to $a^*$. 

To start with, let us first define the isolatedness of a NE $a^*$. This property is desirable for showing that this $a^*$ is indeed an ``attractor" of all NEs of ``nearby games", and it is central to the local convergence analysis of Newton-type methods. 
\begin{defn}
    A NE $a^*$ of the game $\mathcal{G}(\mathcal{A},J)$ is said to be locally unique, or isolated, if there exists a neighborhood $\Omega$ of $a^*$ such that 
    \begin{equation}\label{eq:uniquess}
        \text{NESet}(\mathcal{G})\cap \Omega=\{a^*\},
    \end{equation}
    where $\text{NESet}(\mathcal{G})$ is the set of NEs of the game $\mathcal{G}(\mathcal{A},J)$.
\end{defn} 

The next theorem establishes the conditions under which $a^*$ is an isolated NE of the game $\mathcal{G}(\mathcal{A},J)$. 
\begin{thm}[Proposition 12.14 in \cite{pang_bookchapter}]\label{uniqueness}
    Let A1 and A2 hold. Let $a^*$ be a NE of the game \eqref{NE_problem}. If $H(a^*)$ is strictly semicopositive on the critical cone $\mathcal{C}(a^*;\mathcal{A},F)$, then $a^*$ is an isolated NE. 
\end{thm}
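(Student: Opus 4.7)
The plan is to argue by contradiction, leveraging the variational inequality characterization of Theorem \ref{Equivalence_VI} together with the Cartesian structure of the feasible set $\mathcal{A}=\prod_i \mathcal{A}_i$ and of the critical cone in \eqref{cone_game}. Suppose $a^*$ is not isolated; then for shrinking neighborhoods one can extract a sequence of Nash equilibria $a^k \to a^*$ with $a^k \neq a^*$. Normalizing the deviations as $d^k = (a^k - a^*)/\|a^k - a^*\|$ on the unit sphere and passing to a convergent subsequence yields a limit direction $d$ with $\|d\|=1$. The goal is to prove simultaneously that $d \in \mathcal{C}(a^*;\mathcal{A},F)$ and that $d_i^T\sum_j H_{ij}(a^*)d_j \leq 0$ for every agent $i$; the contradiction will then come from Definition \ref{defn:semicopositiveness}.

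For membership in the critical cone, each $d_i$ lies in $\mathcal{T}(\mathcal{A}_i;a_i^*)$ since $a_i^k, a_i^* \in \mathcal{A}_i$, by the very definition of tangent cone. To see $d_i \in \nabla_{a_i}J_i(a^*)^{\perp}$, I would exploit the agent-wise variational inequalities arising from Theorem \ref{Equivalence_VI}: testing the VI at $a^*$ against the feasible vector $a_i^k$ yields $(a_i^k - a_i^*)^T\nabla_{a_i}J_i(a^*) \geq 0$, while testing the VI at $a^k$ against $a_i^*$ yields $(a_i^* - a_i^k)^T\nabla_{a_i}J_i(a^k) \geq 0$. Dividing by $\|a^k - a^*\|$ and using the continuity of $\nabla_{a_i}J_i$ from A1, both inequalities survive in the limit and pin $d_i^T\nabla_{a_i}J_i(a^*) = 0$.

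The second-order step follows by adding the two agent-wise inequalities to obtain $(a_i^k - a_i^*)^T[\nabla_{a_i}J_i(a^*) - \nabla_{a_i}J_i(a^k)] \geq 0$. Since A1 provides twice continuous differentiability with Lipschitz Hessians, a first-order Taylor expansion of $\nabla_{a_i}J_i(\cdot)$ about $a^*$ gives
\[
-(a_i^k - a_i^*)^T\sum_{j=1}^N H_{ij}(a^*)(a_j^k - a_j^*) + o(\|a^k-a^*\|^2) \geq 0.
\]
Dividing by $\|a^k-a^*\|^2$ and taking $k\to\infty$ produces $d_i^T\sum_j H_{ij}(a^*)d_j \leq 0$ for every $i$. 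Since $d \in \mathcal{C}(a^*;\mathcal{A},F)\setminus\{0\}$, strict semicopositivity demands at least one index $i$ with $d_i^T\sum_j H_{ij}(a^*)d_j > 0$, yielding a contradiction.

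The main obstacle is the coordinated limit argument that places $d$ in the critical cone (a first-order constraint) while simultaneously establishing the second-order nonpositivity \emph{along every Cartesian block} (a quadratic consequence of the paired first-order inequalities). Uniform control of the Taylor remainder through the Lipschitz Hessians in A1 is essential, as is the partitioned agent-wise form of the Nash conditions, which permits both the orthogonality and the quadratic bound to be enforced block by block rather than merely in aggregate—a distinction that matters because $H(a^*)$ is assumed only semicopositive, not positive semidefinite.
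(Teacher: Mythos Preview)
Your argument is correct. Note, however, that the paper does not actually prove this theorem; it is quoted as Proposition~12.14 of the cited book chapter \cite{pang_bookchapter}, so there is no in-paper proof to compare against. That said, the contradiction argument you outline---extracting a normalized limit direction $d$ from a hypothetical sequence of nearby equilibria, verifying $d_i\in\mathcal{T}(\mathcal{A}_i;a_i^*)\cap\nabla_{a_i}J_i(a^*)^\perp$ via the paired agent-wise VI tests at $a^*$ and at $a^k$, and then adding those same two inequalities and Taylor-expanding to force $d_i^T(H(a^*)d)_i\le 0$ for every block $i$---is precisely the standard route for isolatedness of partitioned VI solutions under semicopositivity, and is essentially what the cited reference does. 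Your identification of the key obstacle is also accurate: the argument hinges on obtaining the quadratic nonpositivity block-by-block rather than only in aggregate, which is exactly why strict \emph{semi}copositivity (rather than copositivity of the full matrix) is the right hypothesis and why the Cartesian Nash structure is essential.
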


Next, to be able to define NE stability, we first  formally define ``nearby games" of a given game $\mathcal{G}(\mathcal{A},J)$ restricted to a subset of $\mathcal{A}$, by introducing a \textit{game neighborhood} concept. 
\begin{defn}\label{game_neighbor}
    Given a game $\mathcal{G}(\mathcal{A},J)$ satisfying A1 and A2, a game $\Tilde{\mathcal{G}}(\mathcal{A},\Tilde{J})$ is said to be in the $\epsilon$-neighborhood of $\mathcal{G}(\mathcal{A},J)$ restricted to $\mathcal{S}\subset\mathcal{A}$, denoted by $\Tilde{\mathcal{G}}\in\pmb{\mathrm{B}}(\mathcal{G};\epsilon,\mathcal{S})$, if $\Tilde{\mathcal{G}}$ satisfies A1 and A2 and their pseudogradients satisfy 
    \begin{equation}
        \|F-\Tilde{F}\|_{\mathcal{S}}\coloneqq \sup_{a\in\mathcal{S}}\|F(a)-\Tilde{F}(a)\|<\epsilon,
    \end{equation}
    where $\Tilde{F}$ is the pseudogradient of the game $\Tilde{\mathcal{G}}$. 
\end{defn}

 
\begin{defn}\label{DE:NE_stability}
    A NE $a^*$ of the game $\mathcal{G}(\mathcal{A},J)$ is said to be \textit{stable} if for every open neighborhood $\Omega$ of $a^*$ satisfying \eqref{eq:uniquess} and for every $\Tilde{\mathcal{G}}\in\pmb{\mathrm{B}}(\mathcal{G};\epsilon,\mathcal{A}\cap \mathsf{cl}\Omega)$, where $\mathsf{cl}\Omega$ is the closure of $\Omega$,
    \begin{equation}\label{eq:local_solvability}
        \text{NESet}(\Tilde{\mathcal{G}}) \cap \mathsf{cl}\Omega\neq \emptyset;
    \end{equation}
    and, in addition, there exists two positive scalars $c$ and $\epsilon$ such that, for every $\Tilde{\mathcal{G}}\in\pmb{\mathrm{B}}(\mathcal{G};\epsilon,\mathcal{A}\cap \mathsf{cl}\Omega)$ and every $\Tilde{a}\in\text{NESet}(\Tilde{\mathcal{G}}) \cap \mathsf{cl}\Omega$,
    \begin{equation}\label{eq:error_def}
        \|\Tilde{a}-a^*\|\leq c\|e(\Tilde{a})\|,
    \end{equation}
    where $e(a)=F(a)-\Tilde{F}(a)$ is the difference function of the pseudogradients of the two games.
\end{defn}

Definition \ref{DE:NE_stability} states that a NE of a game is stable if (i) its nearby games are locally solvable (i.e., \eqref{eq:local_solvability}), and (ii) the distance between the NE of the given game and that of a nearby game yields an upper bound in terms of the maximum deviation of the pseudogradients of the two games (i.e., \eqref{eq:error_def}). This stability concept corresponds to the regularity condition in VI problems \cite[Definition 5.3.2]{book_pang}. 
The following theorem establishes the stability result. The proofs of these and other results are in the Appendix.

\begin{thm}\label{Th:stability}
Let $a^*$ be a NE of the game $\mathcal{G}(\mathcal{A},J)$. If A1 and A2 hold and if $H(a^*)$ is strictly semicopositive on the critical cone $\mathcal{C}(a^*;\mathcal{A},F)$, then $a^*$ is a stable NE.
\end{thm}

\subsection{Josephy-Newton method}\label{sec:JN}
We consider the Josephy-Newton method to solve the generalized equation  \eqref{eq:GE}. Specifically, given $a^k$, one solves the following semi-linearized generalized equation to find $a^{k+1}$:
\begin{equation}\label{eq:JN}
    F(a^k)+H(a^k)(a-a^k)+\mathcal{N}_{\mathcal{A}}(a)\ni 0,
\end{equation}
where $H(a^k)$ is the game Hessian \eqref{Hessian} evaluated at $a^k$, and $F(a^k)+H(a^k)(a-a^k)$ is the linearization of the pseudogradient $F$ at $a^k$. 


The next theorem shows that the sequence $\{a^k\}$, generated by \eqref{eq:JN}, converges $Q$-quadratically to $a^*$. 
\begin{thm}\label{tm:stability_NE}
    Let $a^*$ be a stable NE of the game $\mathcal{G}(\mathcal{A},J)$ and $F$ be its pseudogradient. There exists a $\delta>0$ such that for every $a^0\in\mathcal{A}\cap\pmb{\mathrm{B}}(a^*,\delta)$, where $\pmb{\mathrm{B}}(a^*,\delta)$ is the open ball with center at $a^*$ and radius $\delta$,  the Josephy-Newton update \eqref{eq:JN} generates a well-defined sequence $\{a^k\}$ in $\pmb{\mathrm{B}}(a^*,\delta)$, and every such sequence converges $Q$-quadratically to $a^*$. 
\end{thm}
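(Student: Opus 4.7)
The plan is to treat each Josephy--Newton step as the NE-seeking problem for a linearized game whose pseudogradient is $F^k$ from \eqref{eq:Fk}, and to invoke the stability of $a^*$ (Theorem~\ref{Th:stability}) at each iteration. Explicitly, I define $\mathcal{G}^k$ as the quadratic game whose $i$-th player cost is the second-order Taylor expansion of $J_i$ around $a^k$, so that the pseudogradient of $\mathcal{G}^k$ is exactly $F^k$ and the iteration \eqref{eq:JN} amounts to $a^{k+1}\in\text{NESet}(\mathcal{G}^k)$. Since $\mathcal{G}^k$ is quadratic, A1 holds trivially, and A2 is inherited from the player-wise convexity of $\mathcal{G}$ at $a^k$ under Assumption A2 (so that $\nabla^2_{a_ia_i}J_i(a^k)\succeq 0$), so Definition~\ref{game_neighbor} and Theorem~\ref{Th:stability} apply to the pair $(\mathcal{G},\mathcal{G}^k)$.

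\textbf{Execution.} First I fix an open neighborhood $\Omega$ of $a^*$ together with the stability constants $c,\epsilon>0$ from Theorem~\ref{Th:stability}, and shrink $\Omega$ if necessary so that $\tfrac{L}{2}(\operatorname{diam}\Omega)^2<\epsilon$ and $\operatorname{diam}\Omega\leq\tfrac{1}{2cL}$, where $L$ is the Lipschitz constant of $H$ supplied by A1. A standard Taylor estimate then gives
\[
\|F(a)-F^k(a)\|=\|F(a)-F(a^k)-H(a^k)(a-a^k)\|\leq\tfrac{L}{2}\|a-a^k\|^2
\]
uniformly for $a,a^k\in\mathsf{cl}\,\Omega$, whence $\mathcal{G}^k\in\pmb{\mathrm{B}}(\mathcal{G};\epsilon,\mathcal{A}\cap\mathsf{cl}\,\Omega)$. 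Choose $\delta\in(0,\tfrac{1}{4cL})$ with $\pmb{\mathrm{B}}(a^*,\delta)\subset\Omega$, and induct on $k$: given $a^k\in\pmb{\mathrm{B}}(a^*,\delta)$, Theorem~\ref{Th:stability} supplies some $a^{k+1}\in\text{NESet}(\mathcal{G}^k)\cap\mathsf{cl}\,\Omega$ with
\[
\|a^{k+1}-a^*\|\leq c\|F(a^{k+1})-F^k(a^{k+1})\|\leq\tfrac{cL}{2}\|a^{k+1}-a^k\|^2\leq cL\bigl(\|a^{k+1}-a^*\|^2+\|a^k-a^*\|^2\bigr).
\]
Using $\|a^{k+1}-a^*\|\leq\operatorname{diam}\Omega\leq\tfrac{1}{2cL}$ to absorb the $\|a^{k+1}-a^*\|^2$ term, I obtain $\|a^{k+1}-a^*\|\leq 2cL\|a^k-a^*\|^2\leq 2cL\delta\cdot\|a^k-a^*\|<\|a^k-a^*\|<\delta$, closing the induction. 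Q-quadratic convergence then follows directly from Definition~\ref{defn:Q_qua}.

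\textbf{Main obstacle.} The most delicate point is verifying $\mathcal{G}^k\in\pmb{\mathrm{B}}(\mathcal{G};\epsilon,\mathcal{A}\cap\mathsf{cl}\,\Omega)$: identifying $F^k$ as the pseudogradient of a bona fide game satisfying A1--A2 (rather than an arbitrary affine map) is essential for invoking Theorem~\ref{Th:stability}, and requires both exhibiting the Taylor-quadratic cost functions of $\mathcal{G}^k$ and checking their player-wise convexity from the diagonal positive semidefiniteness of the blocks of $H(a^k)$. A secondary subtlety is that Theorem~\ref{Th:stability} delivers only existence (and not uniqueness) of an NE of $\mathcal{G}^k$ in $\mathsf{cl}\,\Omega$; since the error bound \eqref{eq:error} applies to every such NE, the phrase ``every such sequence'' in the theorem statement accommodates an arbitrary selection rule for $a^{k+1}$, or alternatively the strict semicopositivity condition of Theorem~\ref{uniqueness} can be shown to persist under the perturbation $\mathcal{G}\mapsto\mathcal{G}^k$ for $a^k$ near $a^*$ (via continuity of the Hessian blocks and of the critical cone), which would make $a^{k+1}$ locally unique.
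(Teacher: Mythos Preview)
Your proof is correct and follows the same overarching strategy as the paper---invoke the stability inequality \eqref{eq:error} for the linearized subproblem at each step and combine it with a Lipschitz--Taylor bound to obtain the quadratic recursion---but there is one structural difference worth noting. The paper first transfers stability from $F$ to the linearization $F^*$ at the \emph{optimal} point $a^*$ (via the strong first-order-approximation result \cite[Proposition~5.2.15]{book_pang}) and then bounds $\|F^*(a^{k+1})-F^k(a^{k+1})\|$, which splits naturally into a Hessian-difference term and the residual $\mathbf{e}_F(a^k)$ of \eqref{eq:residual_error}; this decomposition is reused verbatim in the ISS proof of Theorem~\ref{tm:ISS_pert_NE}. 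You instead compare $F$ and $F^k$ directly, exhibiting $\mathcal{G}^k$ as a bona fide nearby game in the sense of Definition~\ref{game_neighbor} and applying Theorem~\ref{Th:stability} without passing through $F^*$. Your route is slightly more self-contained (no external transfer-of-stability lemma) and keeps the argument entirely in the game-neighborhood language the paper sets up; the paper's route buys a cleaner separation of error sources that feeds directly into the perturbed analysis of Section~\ref{sec:per_JN}. Your handling of the ``every such sequence'' clause and the verification that $\mathcal{G}^k$ satisfies A1--A2 via the positive semidefiniteness of the diagonal blocks $\nabla^2_{a_ia_i}J_i(a^k)$ are both sound.
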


\begin{remark}
    Compared to the Josephy-Newton method for nonlinear optimization \cite{cunis2024input}, here we do not require $F+\mathcal{N}_{\mathcal{A}}$ to be strongly regular at $a^*$. Instead, we only require $a^*$ to be stable (but not strongly stable). Note that since $a^*$ may not be strongly stable, the sequence $\{a^k\}$ satisfying \eqref{eq:JN} may not be unique, but every such sequence converges to $a^*$ $Q$-quadratically. In addition, we do not require any constraint qualification conditions here. In fact, the feasible set $\mathcal{A}_i$ may not necessarily be representable by a finite number of equality and inequality constraints. If $\mathcal{A}_i$ is indeed a finitely representable set, then the strict semicopositive condition in Theorems \ref{uniqueness}-\ref{Th:stability} may be replaced by a weaker condition while ensuring the stability of $a^*$ \cite[Section 5.3.1]{book_pang}. 
\end{remark}

\subsection{Perturbed Josephy-Newton method}\label{sec:per_JN}
Now we consider a perturbed version of the Josephy-Newton method: 
\begin{equation}\label{eq:JN_pertu}
    F(a^k,v^k)+H(a^k,v^k)(a-a^k)+\mathcal{N}_{\mathcal{A}}(a)\ni 0,
\end{equation}
where $F:\mathbb{R}^{n}\times\mathcal{V}\rightarrow\mathbb{R}^n$, $H:\mathbb{R}^{n}\times\mathcal{V}\rightarrow\mathbb{R}^{n\times n}$, and $\bold{v}=\{v^k\}_{k=0}^{\infty}\in\mathcal{V}$ is a disturbance sequence  that could model the inexact evaluation of the pseudogradient or a nonzero reminder in solving \eqref{eq:GE}. Assume that $\|\bold{v}\|_{\infty}=\sup_{k\in\mathbb{N}}\|v^k\|<\infty$, and $F(a,v)$ and $H(a,v)$ are both uniformly Lipschitz continuous in its arguments at $(a^*,0)$.

Eq. \eqref{eq:JN_pertu} can be treated as a perturbed dynamic system:
\begin{equation}\label{dynamic_system}
    a^{k+1}=\mathcal{T}(a^k,v^k).
\end{equation}

 Theorem \ref{tm:ISS_pert_NE} shows that this dynamic system is locally ISS. 

\begin{thm}\label{tm:ISS_pert_NE}
    Let $a^*$ be a stable NE of the game $\mathcal{G}(\mathcal{A},J)$ whose pseudogradient is $F$. There exist $\delta_a>0$, $\delta_v>0$, $L_a<1$ and $L_v<\infty$, such that for every $a^0\in\mathcal{A}\cap\pmb{\mathrm{B}}(a^*,\delta_a)$ and $\|\bold{v}\|_{\infty}<\delta_v$, the sequence $\{a^k\}$ generated by the perturbed Josephy-Newton update \eqref{eq:JN_pertu} satisfies
    \begin{equation}\label{eq:ISS_perturbed_Newton}
        \|a^{k+1}-a^*\|\leq L_a \|a^k-a^*\|+L_v\|v^k\|.
    \end{equation} 
    That is, \eqref{eq:JN_pertu} is locally input-to-state stable around $a^*$.
\end{thm}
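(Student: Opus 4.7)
My plan is to mimic the structure of the proof of Theorem \ref{tm:stability_NE}, interpreting the perturbed iterate $a^{k+1}$ as the solution of VI$(\mathcal{A},\tilde F^k)$, where $\tilde F^k(a)\equiv F(a^k,v^k)+H(a^k,v^k)(a-a^k)$, and viewing this VI as a small perturbation of VI$(\mathcal{A},F^*)$, for which $a^*$ is a stable solution (since $F^*$ is a strong first-order approximation of $F$ at $a^*$, as used in the proof of Theorem \ref{tm:stability_NE}). By the NE-stability of $a^*$ in the game with pseudogradient $F^*$, there exist positive $c,\epsilon,\delta$ such that, whenever $\|F^*-\tilde F^k\|_{\mathcal{A}\cap\mathsf{cl}\pmb{\mathrm{B}}(a^*,\delta)}<\epsilon$, the perturbed game is locally solvable in $\mathsf{cl}\pmb{\mathrm{B}}(a^*,\delta)$ and any solution $a^{k+1}$ in that ball satisfies
\begin{equation}\nonumber
\|a^{k+1}-a^*\|\leq c\,\|F^*(a^{k+1})-\tilde F^k(a^{k+1})\|.
\end{equation}
The whole argument will thus rest on bounding this residual in terms of $\|a^k-a^*\|$ and $\|v^k\|$, and on choosing $\delta_a,\delta_v$ small enough so that the hypothesis of stability is in force at every iteration.

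Next, I would algebraically decompose the residual. Writing $a^{k+1}-a^*=(a^{k+1}-a^k)+(a^k-a^*)$ and using the definition \eqref{eq:residual_error} of $\mathbf{e}_F$, I get
\begin{equation}\nonumber
\begin{split}
F^*(a^{k+1})-\tilde F^k(a^{k+1})&=\bigl[F(a^k)-F(a^k,v^k)\bigr]-\mathbf{e}_F(a^k)\\
&\quad+\bigl[H(a^*)-H(a^k,v^k)\bigr](a^{k+1}-a^k).
\end{split}
\end{equation}
The three pieces are controlled by (i) uniform Lipschitz continuity of $F(a,\cdot)$ at $(a^*,0)$, giving $\|F(a^k)-F(a^k,v^k)\|\le L_{F,v}\|v^k\|$; (ii) the $O(\|a^k-a^*\|^2)$ bound \eqref{eq:q_qua} on $\mathbf{e}_F$; and (iii) Lipschitz continuity of $H$ jointly in $(a,v)$, giving $\|H(a^*)-H(a^k,v^k)\|\le L_{H,a}\|a^k-a^*\|+L_{H,v}\|v^k\|$. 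Combining with the triangle inequality $\|a^{k+1}-a^k\|\le\|a^{k+1}-a^*\|+\|a^k-a^*\|$, exactly as in \eqref{eq:26}, yields an estimate of the form
\begin{equation}\nonumber
\|a^{k+1}-a^*\|\le c\bigl[\alpha_k\|a^{k+1}-a^*\|+\beta_k\|a^k-a^*\|+\gamma_k\|v^k\|\bigr],
\end{equation}
where $\alpha_k,\beta_k\to0$ as $\|a^k-a^*\|,\|v^k\|\to0$, and $\gamma_k$ is uniformly bounded.

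With $\delta_a$ and $\delta_v$ chosen so that $c\alpha_k\le 1/2$ (so the $\|a^{k+1}-a^*\|$ term can be absorbed) and simultaneously so that $\|F^*-\tilde F^k\|$ on $\mathcal{A}\cap\mathsf{cl}\pmb{\mathrm{B}}(a^*,\delta)$ is strictly less than $\epsilon$, solving for $\|a^{k+1}-a^*\|$ immediately gives the desired linear bound \eqref{eq:ISS_perturbed_Newton} with suitable constants $L_a,L_v$; iterating, Definition \ref{def:ISS} then yields local ISS with $\beta(r,k)=L_a^k r$ and $\gamma(s)=L_v s/(1-L_a)$ provided $L_a<1$, which requires an additional (routine) shrinkage of $\delta_a,\delta_v$. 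Note that $L_a$ need not be less than one for the one-step bound \eqref{eq:ISS_perturbed_Newton} itself; the ISS interpretation comes from iterating \eqref{eq:ISS_perturbed_Newton} once $L_a<1$ is ensured.

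The main obstacle is the induction that keeps $a^{k+1}$ inside $\pmb{\mathrm{B}}(a^*,\delta_a)$: at each step we need $\|F^*-\tilde F^k\|_{\mathcal{A}\cap\mathsf{cl}\pmb{\mathrm{B}}(a^*,\delta)}<\epsilon$ in order to reapply the NE-stability inequality. This forces us to pick $\delta_a,\delta_v$ simultaneously small enough so that (a) the Lipschitz-based perturbation bound on $F^*-\tilde F^k$ is uniformly below $\epsilon$ on the whole neighborhood, (b) the contraction factor $L_a$ stays below $1$, and (c) a one-step trajectory stays inside $\pmb{\mathrm{B}}(a^*,\delta)$. Once these three thresholds are imposed, the induction closes and \eqref{eq:ISS_perturbed_Newton} holds for every $k$. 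A subtle point worth flagging is that the NE $a^{k+1}$ delivered by the stability argument need not be unique (since $a^*$ is stable but not strongly stable), so the theorem should be read as asserting that \emph{every} such selection satisfies the bound, consistently with the analogous remark following Theorem \ref{tm:stability_NE}.
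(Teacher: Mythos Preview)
Your proposal is correct and follows essentially the same route as the paper: interpret $a^{k+1}$ as a solution of VI$(\mathcal{A},\tilde F^k)$, invoke the stability of $a^*$ for VI$(\mathcal{A},F^*)$ to get $\|a^{k+1}-a^*\|\le c\|F^*(a^{k+1})-\tilde F^k(a^{k+1})\|$, then bound the residual via Lipschitz continuity of $F$ and $H$ in $(a,v)$ and absorb the $\|a^{k+1}-a^*\|$ term. The only cosmetic difference is that the paper packages the residual into a perturbed error $\mathbf{e}_F(a,v)=F(a,v)-F(a^*,0)-H(a^*,0)(a-a^*)$ rather than splitting off $F(a^k)-F(a^k,v^k)$ separately as you do; your treatment of the induction keeping iterates in $\pmb{\mathrm{B}}(a^*,\delta)$ is in fact more explicit than the paper's.
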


\begin{cor}\label{C1}
   Let the conditions in Theorem \ref{tm:ISS_pert_NE} hold. The sequence $\{a^k\}$ generated by the perturbed Josephy-Newton update \eqref{eq:JN_pertu} satisfies not only \eqref{eq:ISS_perturbed_Newton} but also:
\begin{equation}\label{eq:ISS_perturbed_Newton_Q}
        \|a^{k+1}-a^*\|\leq L'_a \|a^k-a^*\|^2+L'_v\|v^k\|,
    \end{equation} 
    for some constants $L'_a<\infty$ and $L'_v<\infty$.
\end{cor}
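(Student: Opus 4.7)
The plan is to revisit the bound derived in the proof of Theorem \ref{tm:ISS_pert_NE} and refine the estimate of each of the two terms that appear on its right-hand side, using the additional regularity assumptions already invoked (Lipschitz continuity of $H$, uniform Lipschitz continuity of $F(\cdot,\cdot)$ and $H(\cdot,\cdot)$ at $(a^*,0)$, and the quadratic remainder estimate \eqref{eq:q_qua}). The statement \eqref{eq:ISS_perturbed_Newton_Q} differs from \eqref{eq:ISS_perturbed_Newton} only in that the linear term $L_a\|a^k-a^*\|$ is replaced by a quadratic term $L'_a\|a^k-a^*\|^2$, so the improvement is purely in the state-dependent part, not in the perturbation part.

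First I would start from the inequality already obtained in \eqref{eq:37}, which reads
\begin{equation}\nonumber
\|a^{k+1}-a^*\|\leq c\|H(a^*,0)-H(a^k,0)\|\|a^{k}-a^*\|+c\|\mathbf{e}_F(a^k,v^k)\|,
\end{equation}
with $\mathbf{e}_F(a,v)=F(a,v)-F(a^*,0)-H(a^*,0)(a-a^*)$. The first summand is immediately handled by Lipschitz continuity of $H(\cdot,0)$ at $a^*$, which produces the desired $\|a^k-a^*\|^2$ factor; this mirrors exactly the step \eqref{eq:28} used in Theorem \ref{tm:stability_NE}.

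Second, I would split the residual as
\begin{equation}\nonumber
\mathbf{e}_F(a^k,v^k)=\bigl[F(a^k,v^k)-F(a^k,0)\bigr]+\bigl[F(a^k,0)-F(a^*,0)-H(a^*,0)(a^k-a^*)\bigr].
\end{equation}
The first bracket is bounded by a constant multiple of $\|v^k\|$ by the uniform Lipschitz continuity of $F$ in $v$ at $(a^*,0)$; this is the sole contributor to the $L'_v\|v^k\|$ term. The second bracket is precisely $\mathbf{e}_F(a^k)$ from \eqref{eq:residual_error}, and by \eqref{eq:q_qua} it is $O(\|a^k-a^*\|^2)$. Combining the two estimates with the first summand yields \eqref{eq:ISS_perturbed_Newton_Q} with $L'_a=cL_H+cC$ and $L'_v=cL_v'$, where $L_H$, $C$, and $L_v'$ are the Lipschitz constant of $H(\cdot,0)$, the constant from \eqref{eq:q_qua}, and the partial Lipschitz constant of $F$ in $v$, respectively.

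The main obstacle is very mild: one needs to make sure that the splitting of $\mathbf{e}_F(a^k,v^k)$ cleanly separates the $v$-dependent contribution from the pure linearization error and that the $v$-independent piece still enjoys the quadratic bound \eqref{eq:q_qua}. Since the assumptions of Theorem \ref{tm:ISS_pert_NE} already guarantee both Lipschitz continuity of $H$ at $a^*$ and uniform Lipschitz continuity of $F(\cdot,\cdot)$ at $(a^*,0)$, no new hypothesis is required; the corollary is essentially a refined bookkeeping of the same inequality that produced Theorem \ref{tm:ISS_pert_NE}, extracting the quadratic rate that is already implicitly present in the unperturbed analysis of Theorem \ref{tm:stability_NE}.
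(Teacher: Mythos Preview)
Your proposal is correct and follows essentially the same route as the paper's proof: both start from \eqref{eq:37}, bound the Hessian-difference term by $cL_H\|a^k-a^*\|^2$ via Lipschitz continuity of $H(\cdot,0)$, split $\mathbf{e}_F(a^k,v^k)$ into $F(a^k,v^k)-F(a^k,0)$ (bounded by $L'_v\|v^k\|$) and the unperturbed residual $\mathbf{e}_F(a^k,0)$ (bounded quadratically via \eqref{eq:q_qua}), and then collect terms. The only difference is cosmetic ordering of the estimates.
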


\begin{remark}
    The strictly semicopositivity condition used to establish the stability of the NE (and thus Theorems \ref{tm:stability_NE} and \ref{tm:ISS_pert_NE} and Corollary \ref{C1}) is weaker than the strictly monotone game condition widely used in the literature \cite{Lacra_dissipativity,dominic_2}, which requires $F$ to be strictly monotone on $\mathcal{A}$. Note that $F$ is strictly monotone if and only if the game Hessian $H$ is positive definite on $\mathcal{A}$, leading to the satisfaction of the strict semicopositivity condition. However, a strictly semicopositive $H$ does not ensure a monotone $F$. For example, consider a two-player game on   $\mathbb{R}_{++}\times\mathbb{R}_{++}$ with costs $J_1=(1/2)x_1^2-3x_1x_2$ and $J_2=x_1x_2$. The pseudogradient $F(a)=\big[\begin{smallmatrix}
  1 & -3\\
  1 & 0
\end{smallmatrix}\big]a$ is \textbf{not} monotone on $\mathbb{R}_{++}\times\mathbb{R}_{++}$; while the game Hessian $H=\big[\begin{smallmatrix}
  1 & -3\\
  1 & 0
\end{smallmatrix}\big]$ is strictly semicopositive on $\mathbb{R}_{++}\times\mathbb{R}_{++}$. In other words, every strictly monotone game satisfies the strictly semicopositive condition, but the reverse is not true.
\end{remark}

\subsection{Distributed Josephy-Newton method}\label{sec:dis_JN}
Next we consider distributed algorithms where each agent performs a local computation  to update its own decision variable $a_i$ using its local information, i.e., $J_{i}$, $\nabla_{a_i} J_{i}$, and $\nabla^2_{a_ia_i} J_{i}$; but not $J_{-i}$ or its gradient or Hessian. Other agents' decision variables from previous rounds of iterations are available. Such distributed NE seeking is often designed based on better- or best- response dynamics \cite{Lacra_dissipativity,best_response}. We here show that the Josephy-Newton method can be used to solve the best response or proximal response problem, which, with appropriate monotonicity assumptions, leads to the NE \cite{best_response}.  

Given the current decision variables $(\Bar{a}_{i},\Bar{a}_{-i})$, agent $i$ seeks its proximal response to $\Bar{a}_{-i}$, denoted by $\Bar{a}_i^*$, as its next decision, i.e.,
\begin{equation}\label{eq:proximal_response}
\Bar{a}_i^*\in\argmin_{a_{i}\in\mathcal{A}_i}\left(J_i(a_i,\Bar{a}_{-i})+\frac{\tau}{2}\|a_i-\Bar{a}_i\|^2\right),
\end{equation}
where $\tau$ is a parameter of selection. Denote $\Bar{J}_i(a_i,\Bar{a}_{-i})=J_i(a_i,\Bar{a}_{-i})+\frac{\tau}{2}\|a_i-\Bar{a}_i\|^2$.

The optimization \eqref{eq:proximal_response} can be solved by the Newton update:
\begin{equation}\label{eq:distributed_JN}
    \nabla_{a_i}\Bar{J}_i(a_i^k,\Bar{a}_{-i})+\nabla^2_{a_ia_i}\Bar{J}_i(a_i^k,\Bar{a}_{-i})(a_i-a_i^k)+\mathcal{N}_{\mathcal{A}_i}(a_i)\ni 0.
\end{equation}

Theorem \ref{convergence_dis_local} shows that $a_i^k$ from \eqref{eq:distributed_JN} converges to $\Bar{a}_i^*$; and that the proximal response converges to a NE under a monotonicity assumption on $F$. 
\begin{thm}\label{convergence_dis_local}
    Let A1 and A2 hold. If  $\nabla^2_{a_ia_i}\Bar{J}_i(\Bar{a}_i^*,\Bar{a}_{-i})$ is strictly copositive on the critical cone $\mathcal{C}_i(\Bar{a}_i^*;\mathcal{A}_i,F^*_{i,{\Bar{a}_{-i}}})$ and $a_i^0$ is sufficiently close to $\Bar{a}_i^*$, then $\{a_i^k\}$ generated by  \eqref{eq:distributed_JN} converges to $\Bar{a}_i^*$ $Q$-quadratically. In addition, if $F$ is monotone on $\mathcal{A}$ and $\tau$ is large enough, then the proximal response $\{\Bar{a}_1^*,...,\Bar{a}_i^*\}$ by iterating $i$ converges to a NE of the game $\mathcal{G}(\mathcal{A},J)$. 
\end{thm}

The ISS property of the distributed Josephy-Newton \eqref{eq:distributed_JN} follows from a similar analysis as in Section \ref{sec:per_JN}.

\section{Newton Methods for GNE Problems}\label{sec:GNE_section}
This section considers the GNE problem \eqref{GNE_problem}.
In this section, we assume that $\mathcal{A}_i(a_{-i})$ can be represented by a finite number of inequality constraints, i.e., $\mathcal{A}_i(a_{-i})=\{a_i\in\mathbb{R}^{n_{i}}: g_i(a_i,a_{-i})\leq0\}$, where $g_i:\mathbb{R}^{n}\rightarrow \mathbb{R}^{m_i}$. The game \eqref{GNE_problem} can then be equivalently written as 
\begin{subequations}\label{GNE_2}
    \begin{align}
        \min_{a_{i}}\quad & J_i(a_i,a_{-i})\\
         s.t.  \quad &
          g_i(a_i,a_{-i})\leq0.
    \end{align}
\end{subequations}

If $a^*=\{a_1^*,a_2^*,..,a_N^*\}$ solves \eqref{GNE_2} for all $i\in\mathcal{N}$, then $a^*$ is a GNE \cite{GNEP}. The Lagrangian associated with \eqref{GNE_2} is
\begin{equation}
L_i(a_i,a_{-i},\lambda_i)=J_i(a_i,a_{-i})+\lambda_i^Tg_i(a_i,a_{-i}),
\end{equation}
where $\lambda_i\in\mathbb{R}^{m_i}_+$ are dual variables. Denote $m=\sum_{i=1}^N m_i$.

If $a_i^*$ is an optimal solution to \eqref{GNE_2} given $a_{-i}^*$, and if a suitable constraint qualification holds, there exists $\lambda_i^*\in\mathbb{R}^{m_i}_+$ such that $(a_i^*,\lambda_i^*)$ satisfy the following KKT conditions \cite{GNEP_KKT}:
\begin{equation}\label{KKT_i}
    \begin{pmatrix}
\nabla_{a_i}L_i(a_i,a_{-i}^*,\lambda_i)\\
-g_i(a_i,a_{-i}^*)
\end{pmatrix}
+\mathcal{N}_{\mathbb{R}^{n_{i}}\times\mathbb{R}^{m_i}_+}(a_i,\lambda_i)\ni0
\end{equation}

 Concatenating these $N$ KKT systems gives
\begin{equation}\label{KKT_concatenated}
    \begin{pmatrix}
\mathbf{L}(a^*,\lambda^*)\\
-\mathbf{g}(a^*)
\end{pmatrix}
+\mathcal{N}_{\mathbb{R}^{n}\times\mathbb{R}^{m}_+}(a^*,\lambda^*)\ni0
\end{equation}
where 
 \begin{equation}\nonumber
 \mathbf{L}(a,\lambda)=\begin{pmatrix}
           \nabla_{a_1}L_1(a,\lambda_1) \\
           \vdots \\
           \nabla_{a_N}L_N(a,\lambda_N)
         \end{pmatrix},
 \end{equation}
\begin{equation}\nonumber
   \mathbf{g}(a)=\begin{pmatrix}
           g_{1}(a) \\
           \vdots \\
           g_{N}(a)
         \end{pmatrix}, \quad \text{and} \quad
            \lambda=\begin{pmatrix}
           \lambda_1 \\
           \vdots \\
           \lambda_N
         \end{pmatrix}. 
\end{equation}

We make the following assumption throughout this section. 

A3. (Smoothness): For every $i$, the functions $J_i$ and $g_i$ in \eqref{GNE_2} are twice continuously differentiable in their arguments, and their second-order derivatives are Lipschitz continuous.

\subsection{A special class of GNEs: Variational-GNE}\label{sec:v-GNE}
Due to the complexity of the GNE problems, most existing studies focus on a special class of GNEs: v-GNE \cite{dominic_2,Lacra_distributed2}, which often requires the following assumptions. 

A4. (Player-wise convexity): For every $i$  and $a_{-i}$, $J_i(\cdot,a_{-i})$ is convex, and $\mathcal{A}_i(a_{-i})$ is compact and convex;

A5. (Common constraints): Agents share common constraints: $g_1=...=g_N=\overline{g}$. 

Under Assumption A5, we have $m_1=...=m_N=\overline{m}$ and that \eqref{KKT_concatenated} becomes
\begin{equation}\label{KKT_concatenated_common}
    \begin{pmatrix}
\mathbf{L}(a^*,\lambda^*)\\
\begin{Bmatrix}-\overline{g}(a^*)\\
\vdots\\
-\overline{g}(a^*)
\end{Bmatrix}\text{$N$ times}
\end{pmatrix}
+\mathcal{N}_{\mathbb{R}^{n}\times\mathbb{R}^{N\overline{m}}_+}(a^*,\lambda^*)\ni0.
\end{equation}

A v-GNE \cite{GNEP} is defined as the primal part of $(a^*,\overline{\lambda}^*)$ such that 
\begin{equation}\label{KKT_concatenated_common_lambda}
    \begin{pmatrix}
\mathbf{L}(a^*,\lambda^*)\\
-\overline{g}(a^*)\\
\end{pmatrix}
+\mathcal{N}_{\mathbb{R}^{n}\times\mathbb{R}^{\overline{m}}_+}(a^*,\overline{\lambda}^*)\ni0.
\end{equation}
Note that $\lambda\in\mathbb{R}^{N\overline{m}}_+$, while $\overline{\lambda}\in\mathbb{R}^{\overline{m}}_+$. 

If $(a^*,\overline{\lambda}^*)$ solves \eqref{KKT_concatenated_common_lambda}, then $(a^*,\lambda^*)$, where $\lambda_i^*=\overline{\lambda}^*$ for all $i$,  must solve \eqref{KKT_concatenated_common}. Further, by noting that \eqref{KKT_concatenated_common_lambda} is the KKT condition of VI$(\mathcal{A},F)$ with $F$ defined by \eqref{P_gradient} and $\mathcal{A}=\{a\in\mathbb{R}^{n}: \overline{g}(a_i,a_{-i})\leq0\}$, we can say that if $a^*$ solves VI$(\mathcal{A},F)$ and if A3-A5 hold, then $a^*$ must be a GNE of \eqref{GNE_2}. 
Due to this equivalence, one may employ the Newton-type methods in Section \ref{sec:NE} directly for the v-GNE seeking.

\subsection{Semismooth Newton methods}\label{sec:GNE}
In this subsection we consider solving \eqref{KKT_concatenated} without A4 and A5. Note that \eqref{KKT_concatenated} is not equivalent to a KKT system of \textbf{\textit{any}} VIs. Instead, it corresponds to a quasi-VI problem, QVI$(\mathcal{A}(a),F)$, in which the feasible set is not fixed but depends on the point being evaluated. We start with the coupled KKT systems \eqref{KKT_concatenated}, use complementarity functions to transform them to nonsmooth equations, and  employ semismooth Newton methods to solve these nonsmooth equations.

A complementarity function $\phi:\mathbb{R}^2\rightarrow\mathbb{R}$ is a function such that $
    \phi(x,y)=0 \Longleftrightarrow  x\geq0, y\geq0, xy=0.$
If $\phi$ is a complementarity function, then by denoting 
\begin{equation}
    \pmb{\phi}(-g(a),\lambda)=\begin{pmatrix}
           \phi(-g_{1}(a),\lambda_1) \\
           \vdots \\
           \phi(-g_{N}(a),\lambda_N)
         \end{pmatrix},
\end{equation}
the KKT system \eqref{KKT_concatenated} can be reformulated as 
\begin{equation}\label{system_equations}
     \Phi(z)=\begin{pmatrix}
          \mathbf{L}(z)\\
          \pmb{\phi}(-g(a),\lambda) \\
           \end{pmatrix}=0,
\end{equation}
where $z=(a,\lambda)$.

One widely-used complementarity function is the $min$ function:
\begin{equation}
    \phi(x,y)=\min\{x,y\} \quad \forall x,y\in\mathbb{R}.
\end{equation}
Because this $min$ function is a strongly semismooth function and because the composition of strongly semismooth functions is strongly semismooth \cite[Proposition 7.4.4]{book_pang}, the function $\Phi$ in \eqref{system_equations} is strongly semismooth everywhere. 

We now consider the semismooth Netwon method to solve the system of equations \eqref{system_equations}. At the iteration $k$, one solves the following linear equation to find $z^{k+1}$:
\begin{equation}\label{semi_newton}
    \Phi(z^{k})+J\Phi(z^k)(z-z^k)=0,
\end{equation}
where $J\Phi(z^k)\in \text{Jac }\Phi(z^k)$, with $\text{Jac }\Phi(z^k)$ being the limiting Jacobian of $\Phi$ at $z^k$. 
 
To ensure that the sequence generated by \eqref{semi_newton} is well-defined, the following quasi-regularity condition is needed.

\begin{defn}\cite[Definition 3]{GNEP_newton}\label{defn:quasi-regular}
    A point $z^*=(a^*,\lambda^*)$ is \textit{quasi-regular} if all matrices in  $\text{Jac }\Phi(z^*)$ are nonsignular.
\end{defn}

\begin{remark}
    According to \cite{quasirefular_vs_stronglyregular}, this quasi-regular condition is weaker than the strongly regular condition in nonlinear optimization problems.  
\end{remark} 

\begin{thm}\cite[Theorem 2]{GNEP_newton}\label{thm:semismooth-conv}
    Let assumption A3 hold. Let $z^*=(a^*,\lambda^*)$ be a quasi-regular solution of system \eqref{system_equations}. Then a neighborhood $\Omega$ of $z^*$ exists so that if $z^0\in\Omega$ then the semismooth Newton update \eqref{semi_newton} is well defined and generates a sequence $\{z^k\}$ that converges $Q$-quadratically to $z^*$. 
\end{thm}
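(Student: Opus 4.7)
The plan is to follow the classical local convergence argument for semismooth Newton methods, specialized to the KKT reformulation \eqref{system_equations}. Since $\Phi$ is a composition of the smooth maps $\mathbf{L}$ and $\mathbf{g}$ with the strongly semismooth $\min$-based complementarity function, the composition rule for strongly semismooth functions \cite[Proposition 7.4.4]{book_pang} shows that $\Phi$ is strongly semismooth everywhere, and in particular at $z^*$. Together with B-differentiability this will supply the key approximation estimate
\begin{equation}\nonumber
\|\Phi(z) - \Phi(z^*) - V(z - z^*)\| = O(\|z - z^*\|^2)
\end{equation}
for every $V \in \text{Jac}\,\Phi(z)$ as $z \to z^*$, which is the strong-semismoothness consequence of Definition \ref{defn:semismooth_f}.

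Next I would use the quasi-regularity hypothesis together with the outer semicontinuity (closedness) of the set-valued map $z \mapsto \text{Jac}\,\Phi(z)$ to produce a neighborhood $\Omega_1$ of $z^*$ on which every element $V \in \text{Jac}\,\Phi(z)$ is nonsingular with $\|V^{-1}\|$ uniformly bounded by some constant $M < \infty$. The argument proceeds by contradiction: if no such neighborhood existed, one could extract sequences $\{z^l\} \to z^*$ and $V^l \in \text{Jac}\,\Phi(z^l)$ either becoming singular or with $\|(V^l)^{-1}\| \to \infty$; by the definition of $\text{Jac}\,\Phi$ in Definition \ref{defn:limiting Jacob} (i.e., limit points of Jacobians at nearby F-differentiable points) together with a diagonal extraction, one could pass to a limit $V \in \text{Jac}\,\Phi(z^*)$ that is singular, contradicting quasi-regularity. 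This uniform nonsingularity step is the main obstacle, because it requires careful bookkeeping between limit points of Jacobians at $z^*$ and those at nearby iterates, and it is also the place where the distinction from the classical smooth Newton theory is most visible.

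Once well-definedness of $z^{k+1}$ is in hand, the identity
\begin{equation}\nonumber
z^{k+1} - z^* = -J\Phi(z^k)^{-1}\bigl[\Phi(z^k) - \Phi(z^*) - J\Phi(z^k)(z^k - z^*)\bigr]
\end{equation}
follows immediately from the iteration \eqref{semi_newton} and $\Phi(z^*) = 0$. Combining the uniform bound $\|J\Phi(z^k)^{-1}\| \le M$ with the strong-semismoothness estimate from the first paragraph yields
\begin{equation}\nonumber
\|z^{k+1} - z^*\| \le M C_0 \,\|z^k - z^*\|^2
\end{equation}
for some constant $C_0$ governing the quadratic remainder. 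Finally, choosing $\Omega \subseteq \Omega_1$ small enough that $M C_0 \|z - z^*\| < 1$ for all $z \in \Omega$ guarantees that $\{z^k\}$ stays inside $\Omega_1$, converges to $z^*$, and does so at the Q-quadratic rate of Definition \ref{defn:Q_qua}.
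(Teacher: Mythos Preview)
Your proposal is correct and follows the standard direct argument for semismooth Newton convergence. The paper does not give a self-contained proof of this theorem (it is quoted from \cite{GNEP_newton}); the sketch it offers instead packages the same ingredients through the abstract notion of a \emph{nonsingular strong Newton approximation scheme} (Definition~\ref{defn:Newton approximation}), invoking \cite[Lemma~7.5.2 and Theorem~7.5.3]{book_pang} to assert that quasi-regularity together with strong semismoothness makes $z\mapsto\text{Jac}\,\Phi(z)$ such a scheme, after which Q-quadratic convergence is immediate. Your route is more elementary in that it unpacks those two citations inline: the outer-semicontinuity/diagonal-extraction step is exactly what underlies Lemma~7.5.2, and your error identity combined with the $O(\|z-z^*\|^2)$ remainder is the content of the approximation-scheme inequality \eqref{eq:quadratic}. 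The paper's abstraction buys reusability (the same scheme is invoked verbatim in the ISS proof of Theorem~\ref{thm:ISS_semi}), while your version is self-contained and makes the role of quasi-regularity more transparent.
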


\subsection{Perturbed semismooth Newton methods}\label{subsection:PSSN}
Next we consider a perturbed version of \eqref{semi_newton} as
\begin{equation}\label{eq:inexact_seminewton}
    \Phi(z^{k})+J\Phi(z^k)(z-z^k)=r^k,
\end{equation}
where $r^k$ is a sequence of disturbances with $\|r^k\|<\infty$ for all $k$. The next theorem shows that \eqref{eq:inexact_seminewton} is locally ISS around $z^*$ with $Q$-quadratic convergence.
\begin{thm}\label{thm:ISS_semi}
    Let assumption A3 hold. Let $z^*$ be a quasi-regular solution to \eqref{system_equations}. There exist $\delta_z>0$, $\delta_r>0$, $L_z<\infty$ and $L_r<\infty$ such that for every $z^0\in\pmb{\mathrm{B}}(z^*,\delta_z)$ and $\|\bold{r}\|_{\infty}=\sup_{k\in\mathbb{N}}\|r^k\|<\delta_r$, the sequence $\{z^k\}$ generated by the perturbed semismooth Newton method \eqref{eq:inexact_seminewton} satisfies
    \begin{equation}
        \|z^{k+1}-z^*\|\leq L_z\|z^k-z^*\|^2+L_r\|r^k\|. 
    \end{equation}
\end{thm}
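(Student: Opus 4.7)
The plan is to exploit the nonsingular strong Newton approximation scheme associated with $\Phi$ at the quasi-regular point $z^*$, which Definition \ref{defn:Newton approximation} together with the remark following it guarantees: since every matrix in $\text{Jac }\Phi(z^*)$ is nonsingular, the family $\mathbf{G}(z,d) = J\Phi(z)d$ with $J\Phi(z)\in\text{Jac }\Phi(z)$ satisfies all three conditions on a neighborhood $\Omega'$ of $z^*$, in particular the uniform Lipschitz-homeomorphism property with inverse modulus $L_{G^{-1}}$ and the quadratic estimate $\|\Phi(z)+\mathbf{G}(z,z^*-z)\|\leq L_G\|z-z^*\|^2$. I would first choose $\delta_z$ and $\delta_r$ small enough so that (i) each iterate remains in $\Omega'$, (ii) the perturbed right-hand side $-\Phi(z^k)+r^k$ lies in the set $V_{z^k}$ on which $\mathbf{G}^{-1}(z^k,\cdot)$ is defined and Lipschitz, guaranteeing that $z^{k+1}$ is well-defined.

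The main algebraic step rewrites \eqref{eq:inexact_seminewton} as
\begin{equation}\nonumber
\mathbf{G}(z^k,z^{k+1}-z^k) = -\Phi(z^k) + r^k,
\end{equation}
and compares it with the evaluation of $\mathbf{G}(z^k,\cdot)$ at the ``ideal'' direction $z^*-z^k$. Setting $e^k := \Phi(z^k) + \mathbf{G}(z^k, z^*-z^k)$, property (2) of Definition \ref{defn:Newton approximation} gives $\|e^k\|\leq L_G\|z^k-z^*\|^2$. Subtracting the two relations yields $\mathbf{G}(z^k,z^*-z^k) - \mathbf{G}(z^k,z^{k+1}-z^k) = e^k - r^k$, and applying the uniform Lipschitz inverse from property (3) gives
\begin{equation}\nonumber
\|z^*-z^{k+1}\| = \bigl\|(z^*-z^k)-(z^{k+1}-z^k)\bigr\| \leq L_{G^{-1}}\|e^k - r^k\| \leq L_{G^{-1}}L_G\|z^k-z^*\|^2 + L_{G^{-1}}\|r^k\|,
\end{equation}
which is the desired bound with $L_z = L_GL_{G^{-1}}$ and $L_r = L_{G^{-1}}$.

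The main obstacle is the bookkeeping needed to keep the iteration inside the region where the strong Newton approximation scheme is valid, so that the induction closes. Concretely, one must choose $\delta_z$ small enough that $L_z\delta_z<\tfrac12$ (say) and then choose $\delta_r$ small enough that the perturbation term $L_r\|r^k\|$ does not push $z^{k+1}$ outside $\pmb{\mathrm{B}}(z^*,\delta_z)$, while simultaneously ensuring that $-\Phi(z^k)+r^k$ lies in the image $V_{z^k}$ on which $\mathbf{G}^{-1}(z^k,\cdot)$ acts. The local Lipschitz continuity of $\Phi$ (from A3 via the semismoothness of $\Phi$) together with the openness of the sets $U_{z^k}$, $V_{z^k}$ containing $\pmb{\mathrm{B}}(0,\epsilon_G)$ in Definition \ref{defn:Newton approximation} provides the containment needed, so that the same constants $\delta_z,\delta_r,L_z,L_r$ work uniformly along the generated sequence.
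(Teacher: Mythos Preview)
Your proposal is correct and follows essentially the same route as the paper: both invoke the nonsingular strong Newton approximation scheme furnished by quasi-regularity, write the perturbed iteration as $\mathbf{G}(z^k,z^{k+1}-z^k)=-\Phi(z^k)+r^k$, compare with $\mathbf{G}(z^k,z^*-z^k)$, and apply the uniform Lipschitz modulus $L_{G^{-1}}$ of the inverse map together with the quadratic estimate \eqref{eq:quadratic} to obtain $L_z=L_GL_{G^{-1}}$ and $L_r=L_{G^{-1}}$. Your explicit bookkeeping on $\delta_z,\delta_r$ to keep the iterates inside $\Omega'$ and the right-hand side inside $V_{z^k}$ is a detail the paper leaves implicit, but otherwise the arguments coincide.
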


\subsection{Distributed semismooth Newton methods}\label{sec:distributed_SS_Newton_GNE}
Similar to the NE problem, the distributed GNE seeking algorithm can also be designed based on proximal response dynamics. That is, given the current decision variables $(\Bar{a}_{i},\Bar{a}_{-i})$, agent $i$ solves the following constrained optimization problem to find its next decision: 
\begin{subequations}\label{eq:Semi_proximal}
    \begin{align}
        \min_{a_{i}}\quad & J_i(a_i,\Bar{a}_{-i})+\frac{\tau}{2}\|a_i-\Bar{a}_i\|^2\\
         s.t.  \quad &
          g_i(a_i,\Bar{a}_{-i})\leq0.
    \end{align}
\end{subequations}

The KKT conditions of \eqref{eq:Semi_proximal} can be solved by the semismooth Newton update with guaranteed convergence and ISS following a similar analysis as in Sections \ref{sec:GNE} and \ref{subsection:PSSN}. However, we comment that the convergence of the proximal response in a GNE problem requires more restrictive conditions on the game structure. One sufficient condition, for example, is that the game is a generalized potential game \cite{GPG_def}.

\section{Application to CG-MPC}\label{sec:CG-MPC}
We apply the ISS results to CG-MPC problems to enable time-distributed solution seeking and real-time computation. 
\subsection{CG-MPC problem}\label{sec:CG_MPC}
 Consider a multi-agent system with dynamics
\begin{equation}\label{equ:basic_dynamics}
    x(t+1)=f(x(t),u(t)),
\end{equation}
where $x=\{x_1,...,x_N\}\in\mathcal{X}$, $u=\{u_1,...,u_N\}\in\mathcal{U}$, $x_i\in\mathcal{X}_i\subset\mathbb{R}^{n_{x_i}}$ and $u_i\in\mathcal{U}_i\subset\mathbb{R}^{n_{u_i}}$. 

Each agent is controlled using MPC to optimize its self-interest subject to coupled constraints. Mathematically, a CG-MPC problem at each  $t$ is formulated as:
\begin{subequations}\label{MPC_1}
    \begin{align}
        \min_{a_i} \quad &J_i(a_i,a_{-i})=\Psi_i(\xi(T))+\sum_{\tau=0}^{T}l_i(\xi(\tau),a(\tau)),\\
         s.t., \quad &\xi(0)=x(t),\\  \quad  &\xi(\tau+1)=f(\xi(\tau),a(\tau)), \\
        &  (\xi_i(\tau),a_i(\tau))\in\mathcal{Z}_i(\xi_{-i}(\tau),a_{-i}(\tau))\label{62d},
    \end{align}
\end{subequations}
where $T\in\mathbb{N}_{++}$, $\tau=0,1,...T$, $\xi_i=\{\xi_i(0),\xi_i(1),...,\xi_i(T)\}$, $\xi=\{\xi_1,...,\xi_N\}$,  $a_i=\{a_i(0),a_i(1),...,a_i(T)\}$, and $a=\{a_1,...,a_N\}$. 
For simplicity, all functions in \eqref{MPC_1} are assumed to be twice continuously differentiable in their arguments, and their second order derivatives are Lipschitz continuous. 

The CG-MPC problem \eqref{MPC_1} can be written compactly as the following GNE problem parameterized by $x(t)$:
\begin{subequations}\label{game_2}
    \begin{align}
        \min_{a_{i}}\quad & J_i^{x(t)}(a_i,a_{-i})\\
         s.t.\quad & a_{i}\in\mathcal{A}^{x(t)}_i(a_{-i}).\label{7b}
    \end{align}
\end{subequations}

In cases where agent $i$'s feasible set does not depend on $a_{-i}$ (i.e., \eqref{7b} becomes $a_{i}\in\mathcal{A}^{x(t)}_i$), the game \eqref{game_2} reduces to a NE problem.

\subsection{Time-distributed CG-MPC solver}\label{sec:TDO}
Instead of solving \eqref{game_2} completely at each sampling instant, we consider approximately tracking the solution trajectory of \eqref{game_2} as $x$ evolves. To this end, we distribute the iterations of the CG-MPC solver over time and warmstart the approximate solution at $t+1$ using the result at $t$.  Specifically, the approximate solution at $t+1$ is generated by
\begin{align}
        v(t+1)&=\mathcal{T}^{K,x(t+1)}(v(t)),\label{eq:TDO}
\end{align}
where $v$ is an estimate of $a$ or $z$ (decision variable or primal-dual variable, depending on the algorithm design), and $\mathcal{T}^{K,x(t)}$ represents a fixed number (i.e., $K$) of solver iterations parametrized by $x(t)$. The iteration \eqref{eq:TDO} is itself a dynamic system evolving in parallel to the controlled plant. The iteration algorithm $\mathcal{T}^{K,x(t)}$ should be designed such that when $K\rightarrow \infty$, $v(t)=v^*(t)$, where $v^*$ denotes the exact solution of the CG-MPC problem at $t$. The Newton-type methods developed in Sections \ref{sec:NE} and \ref{sec:GNE_section} can serve as such algorithms. That is, $\mathcal{T}^{K,x(t)}$ means $K$ times of consecutive updates \eqref{eq:JN} in NE problems with $F$, $H$, and $\mathcal{A}$ being parameterized by $x(t)$; 
or $K$ times of consecutive updates \eqref{semi_newton} in GNE problems 
with $\Phi$ and $J\Phi$ being parameterized by $x(t)$. Such iterations may also be performed in an agent-distributed manner: 
\begin{equation}\label{eq:distributed_TDO}
        v_i(t+1)=\mathcal{T}_i^{K,x(t+1)}(v_i(t)), 
\end{equation}
with $v=\{v_1,\cdots,v_N\}$ and $\mathcal{T}=\{\mathcal{T}_1,\cdots,\mathcal{T}_N\}$, where $\mathcal{T}_i$ can be designed based on the proximal response dynamics.

Next we show that the approximate solution from \eqref{eq:TDO} tracks the optimal solution with bounded tracking error: 
\begin{equation}\label{eq:error}
    e(t)=v(t)-v^*(t).
\end{equation}

\begin{thm}\label{thm:bounded_error}
    Let $x$ and $v^*$ be Lipschitz continuous with respect to $t$ and let $v^*$ be isolated. Let $\mathcal{T}$ be designed according to  \eqref{eq:JN}  or \eqref{semi_newton}  with appropriate conditions as specified in Theorems \ref{tm:ISS_pert_NE} and \ref{thm:ISS_semi}, respectively. Then there exist $\alpha,\theta:\mathbb{N}_{++}\rightarrow\mathbb{R}_{++}$ both of class $\mathcal{L}$ and $e(0)$ sufficiently small so that the $e(t)$ satisfies 
    \begin{equation}\nonumber
        \|e(t+1)\|\leq \alpha(K)\|e(t)\|+\theta(K)\|\Delta x(t)\|,
    \end{equation}
    where $\Delta x(t)=x(t+1)-x(t)$.
\end{thm}

\begin{remark}
    Theorem \ref{thm:bounded_error} states that the tracking error $e(t)$ of the CG-MPC solver remains bounded for all $t$ and could be made arbitrarily small by increasing $K$. Other closed-loop properties, such as the ISS of the interconnected plant-solver system against external disturbances to the plant, and the constraint satisfaction, can be established analogously to \cite{liao2020time}, thanks to the $Q$-quadratic convergence and the ISS property of the developed Newton-type methods.
\end{remark}

\subsection{Numerical studies}
Consider a dynamic system of $3$ agents: 
\begin{equation}\label{eq:dynamic}
\begin{split}
    x(t+1)&=f(x(t),u(t))\\
    &=\begin{bmatrix}
0.5 & 0 & 0\\
0 & 0.3 & 0\\
0 & 0 & 0.2
\end{bmatrix}x(t)+\begin{bmatrix}
1 & 0 & 0\\
0 & 1 & 0\\
0 & 0 & 1
\end{bmatrix}u(t),
\end{split}
\end{equation}
where $x=[x_1,x_2,x_3]^T$ and $u=[u_1,u_2,u_3]^T$. At each $t$, the following GNE problem needs to be solved with $a_{i}=\{a_{i}(0),...,a_{i}(T)\}$, $a=[a_1,a_2,a_3]^T$, and $\xi=[\xi_1,\xi_2,\xi_3]^T$:
\begin{equation}\label{eq:player1}
    \begin{split}
\text{Agent 1:}    \quad    \min_{a_{1}}\quad & \sum_{\tau=0}^{T} (\xi_1(\tau)-1)^2-\xi_1(\tau)\xi_2(\tau) \\
         s.t.  \quad & \xi(0)=x(t),\\
         &\xi(\tau+1)=f(\xi(\tau),a(\tau)), \\
         & a_1(\tau)+a_2(\tau)+a_3(\tau)\leq 1, \quad \forall\tau.\\
\text{Agent 2:}    \quad        \min_{a_{2}}\quad & \sum_{\tau=0}^{T} (\xi_2(\tau)-1)^2+\xi_1(\tau)\xi_2(\tau) \\
         s.t.  \quad & \xi(0)=x(t),\\
         &\xi(\tau+1)=f(\xi(\tau),a(\tau)), \\
         & a_2(\tau)\leq \frac{1}{2}, \quad \forall\tau.\\
\text{Agent 3:}    \quad        \min_{a_{3}}\quad & \sum_{\tau=0}^{T} (\xi_3(\tau)-1)^2 \\
         s.t.  \quad & \xi(0)=x(t)\\
         &\xi(\tau+1)=f(\xi(\tau),a(\tau)), \\
         &a_3(\tau)\geq a_1(\tau)+a_2(\tau), \quad \forall\tau\\
         \quad & a_3(\tau)\geq0, \quad \forall\tau.
    \end{split}
\end{equation}

\textbf{Study 1:} We first solve the GNE problem \eqref{eq:player1} at one specific $t$ (e.g., $t=0$). Because of the coupled constraints, we use the semismooth Newton method developed in Section \ref{sec:GNE}. The convergence of $z^k=\{a^k,\lambda^k\}$ to $z^*=\{a^*,\lambda^*\}$ with the update rule \eqref{semi_newton} and the disturbed update \eqref{eq:inexact_seminewton} are shown in Fig. \ref{errornorm} and Fig. \ref{errornorm_disturbed} respectively. For the disturbed update, a random disturbance is added to each element of $J\Phi(z^k)$, $\forall k$. As shown in Fig. \ref{Newton_fix}, the semismooth Newton method solves the KKT system of \eqref{eq:player1} with fast convergence (Fig. \ref{errornorm}), and is robust to external disturbances (Fig. \ref{errornorm_disturbed}).

\begin{figure}[thpb]
\centering
\subfigure[]{\label{errornorm}
\includegraphics[width=0.23\textwidth]{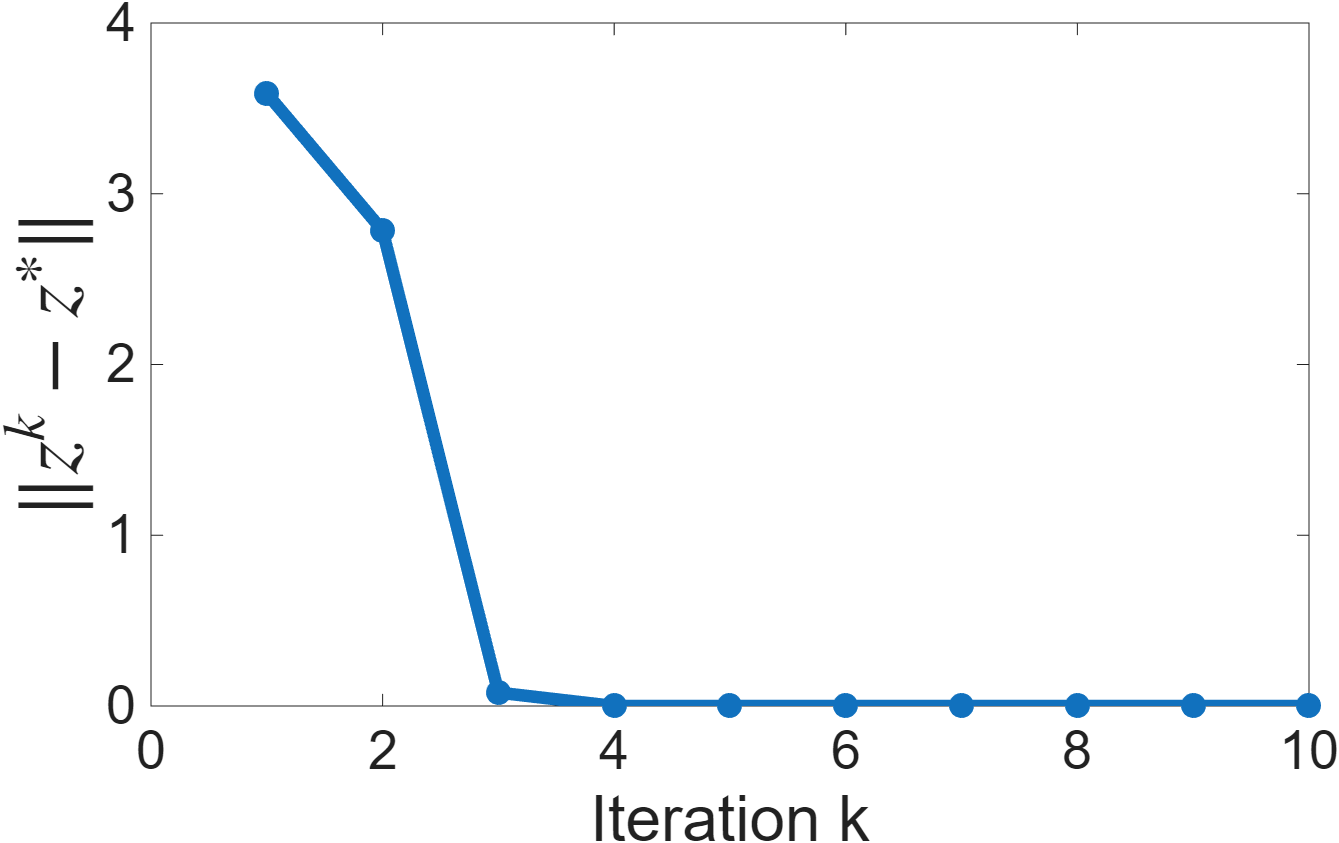}}
\subfigure[]{\label{errornorm_disturbed}
\includegraphics[width=0.23\textwidth]{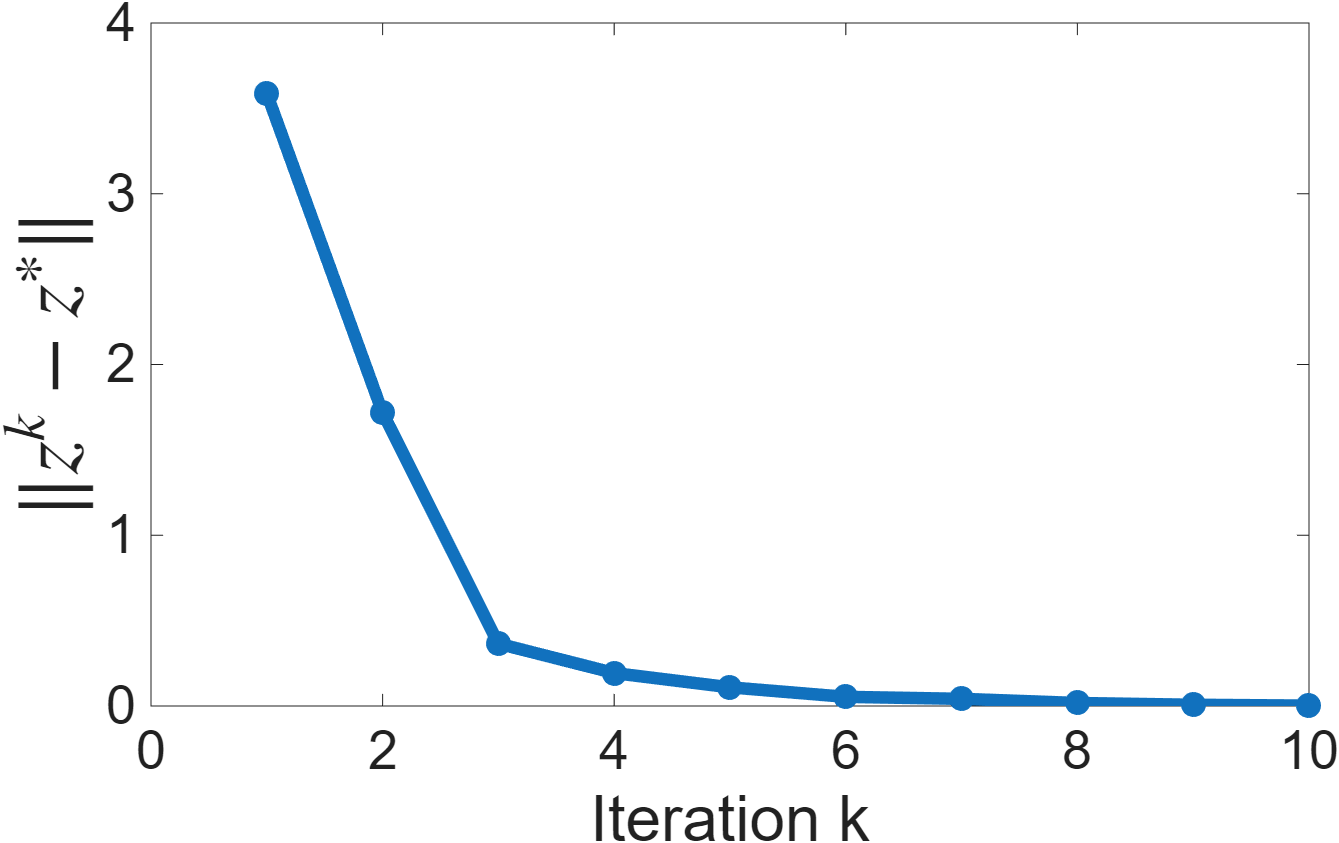}}
\caption{$\|z^k-z^*\|$ with $z^k$ generated from (a) semismooth Newon method \eqref{semi_newton} and (b) disturbed semismooth Newon method \eqref{eq:inexact_seminewton}, respectively.}\label{Newton_fix}
\end{figure}

\textbf{Study 2:} We then consider the CG-MPC problem \eqref{eq:player1} for consecutive $t$ and with time-distributed iterations. Specifically, given $x(0)=[1,3,5]^T$, the CG-MPC solver generates a sequence $\{z^1(0),...,z^K(0)\}$ according to \eqref{semi_newton}, where each element $z^k(0)=\{z^{k,\tau}(0)\}_{\tau\in[0,T]}$ represents a time sequence of the primal-dual variables with $k\in[1,K]$ and $K$ being the maximum iterations at each $t$. The primal part of $z^{K,0}(0)$ is then applied to the plant \eqref{eq:dynamic}, which leads to $x(1)$; the solver then seeks $\{z^1(1),...,z^K(1)\}$  following the same procedure. Note that a sufficiently large $K$ means that the KKT system of \eqref{eq:player1} is solved completely at each $t$, while a small $K$ means possibly non-zero error $e(t)=z^K(t)-z^*(t)$. Figures \ref{state} and \ref{error_dynamic} show $x(t)$ and $\|e(t)\|=\|z^K(t)-z^*(t)\|$ for different $K$, respectively. It is observed that when $K=7$, $\|e(t)\|$ remains $0$ for all $t$, suggesting that $7$ iterations of Newton updates are sufficient to solve the KKT system of \eqref{eq:player1} at each $t$.  As $K$ decreases, the errors at the first few time instants increase, indicating incomplete optimization. As $t$ increases, the state converges to constants, and the error goes to $0$ regardless of $K$, as suggested by Theorem \ref{thm:bounded_error}.   

\begin{figure}[thpb]
\centering
\subfigure[]{\label{state}
\includegraphics[width=0.23\textwidth]{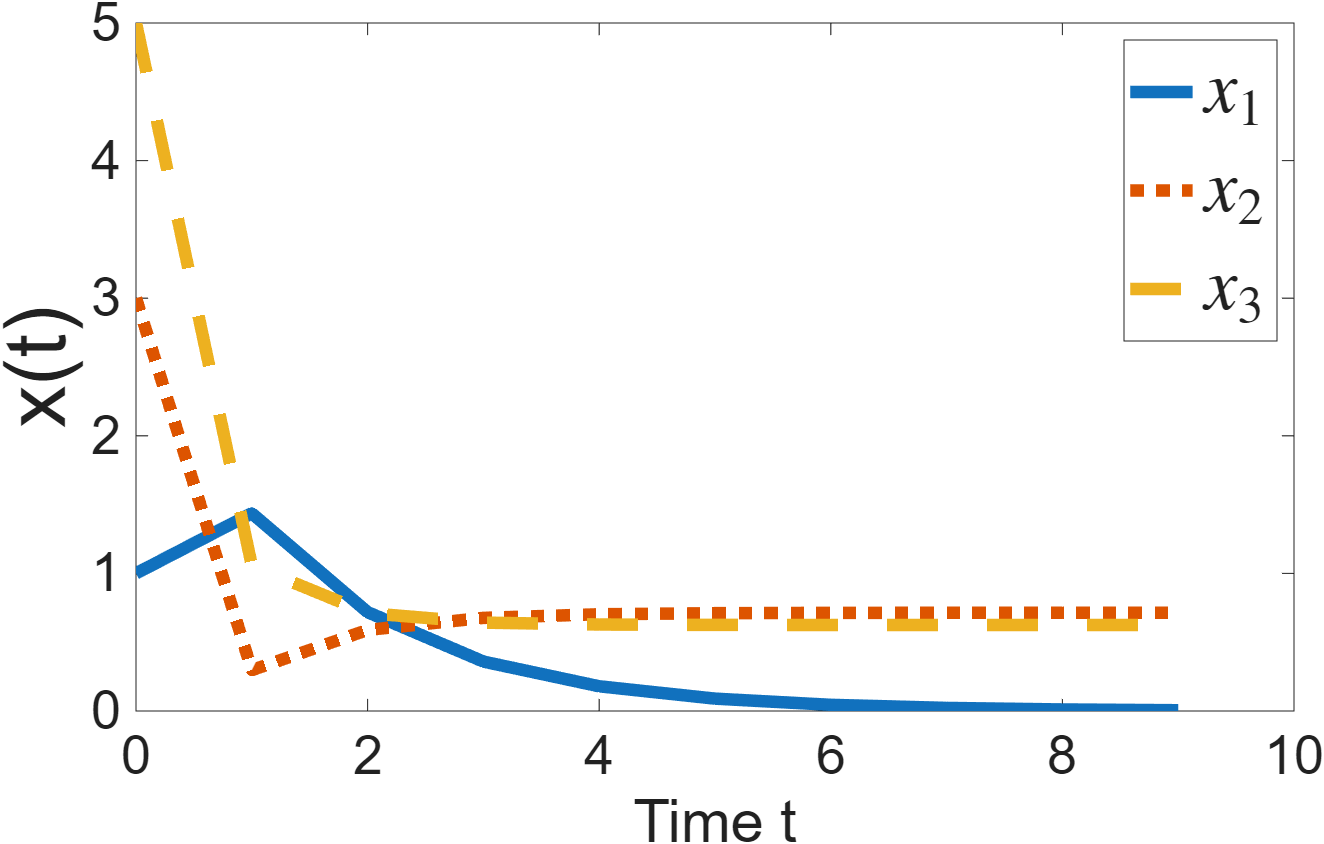}}
\subfigure[]{\label{error_dynamic}
\includegraphics[width=0.23\textwidth]{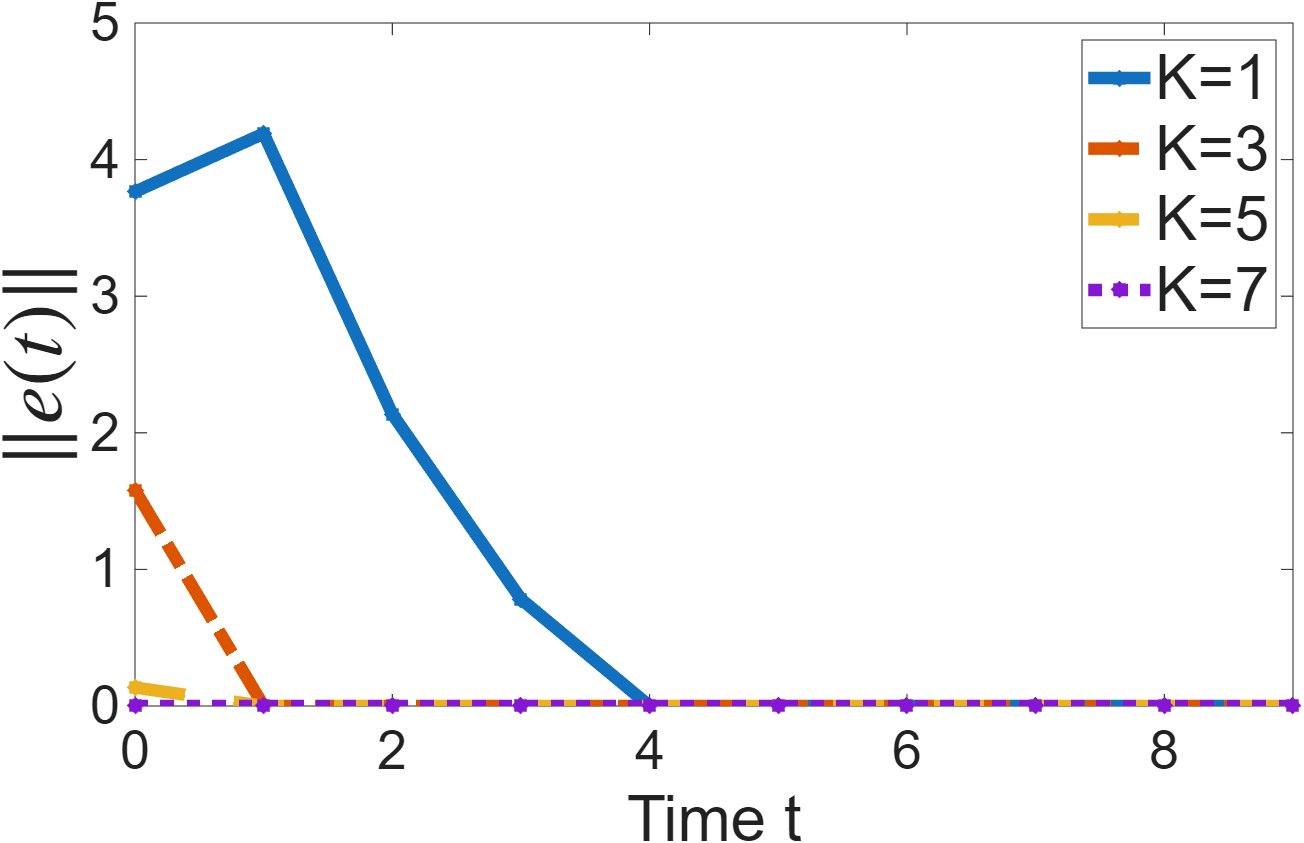}}
\caption{CG-MPC with time-distributed iterations. (a) $x(t)$ generated by $z^*(t)$, and (b) Error norm $\|e(t)\|=\|z^K(t)-z^*(t)\|$ with different $K$.}\label{Newton_dynamic}
\end{figure}

\textbf{Study 3:} Next we consider the agent-distributed implementation. Specifically, at each $t$, agent $i$ solves its proximal response according to \eqref{eq:Semi_proximal} until convergence, and the KKT condition of \eqref{eq:Semi_proximal} is solved by the semismooth Newton method. Other procedures are the same as in Study 2. Figs. \ref{distributed2} and \ref{distributed5} show $\|e_i(t)\|=\|z_i^K(t)-z_i^*(t)\|$ when $K=2$ and $K=5$, respectively. It is observed that for both values of $K$ and for all $i$, $e_i(t)$ goes to $0$ as $t$ increases, indicating the convergence of the proximal response algorithm; and that $K=5$ leads to a much faster convergence of $e_i(t)$ compared to $K=2$, consistent with the observations in Study 2.

\begin{figure}[thpb]
\centering
\subfigure[]{\label{distributed2}
\includegraphics[width=0.23\textwidth]{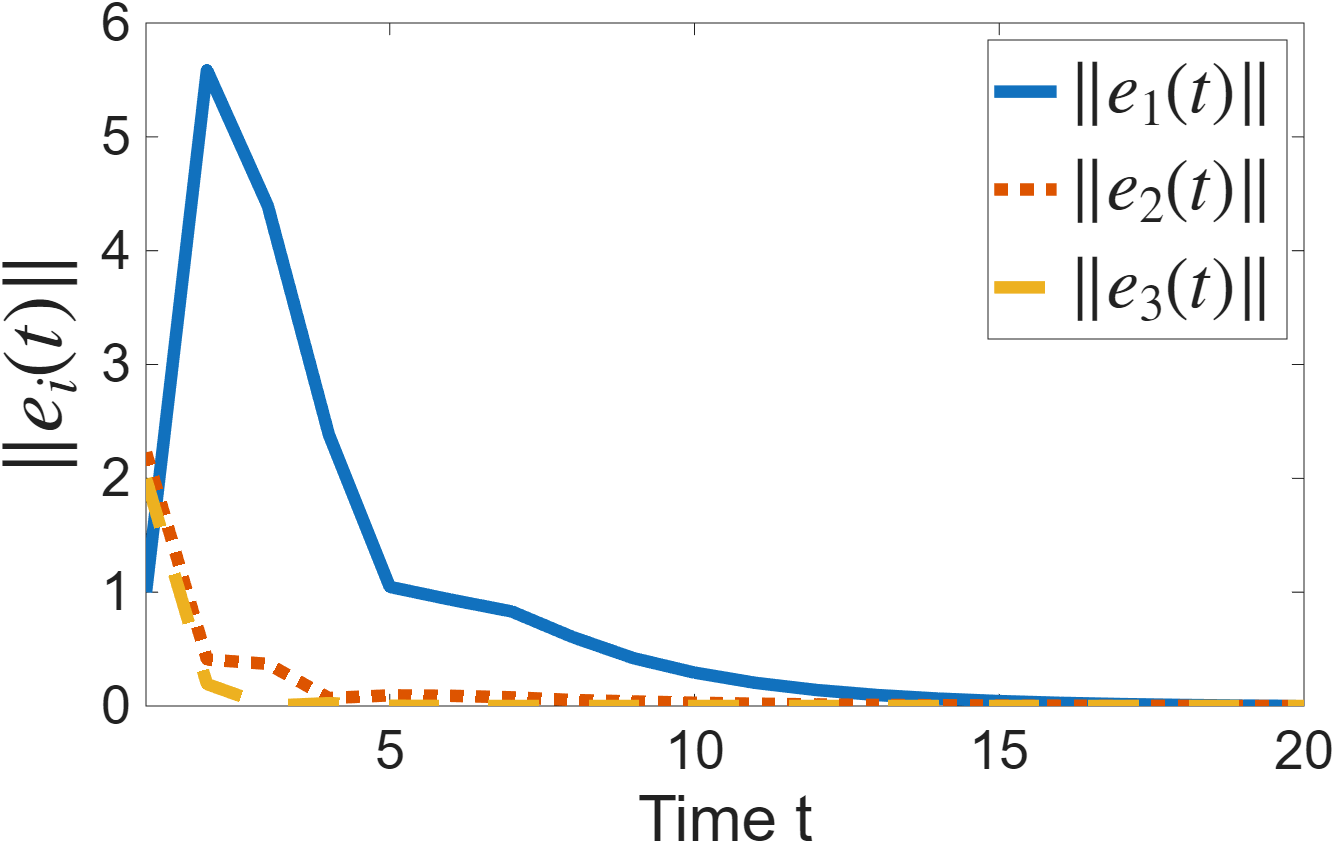}}
\subfigure[]{\label{distributed5}
\includegraphics[width=0.23\textwidth]{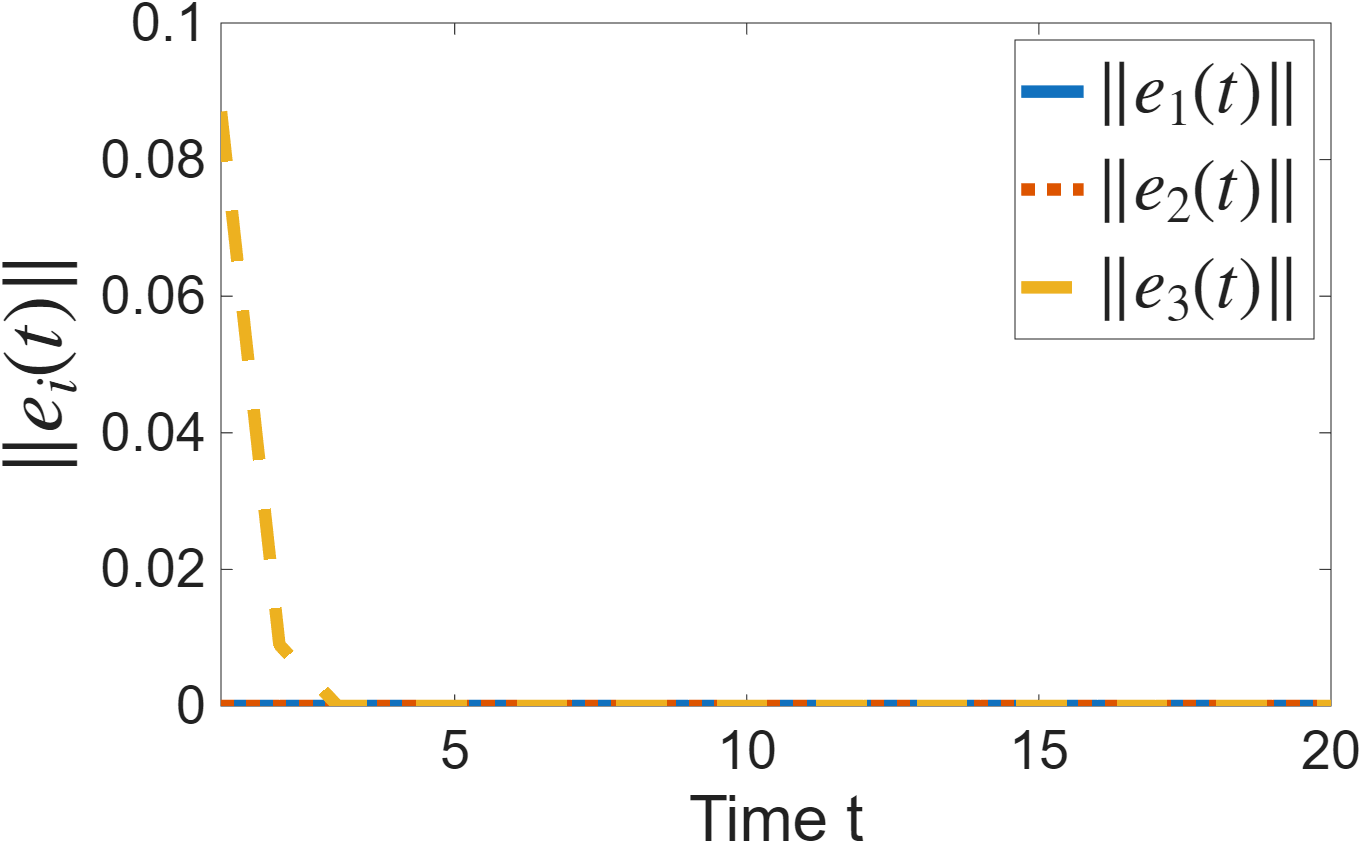}}
\caption{Error norms $e_i(t)=z_i^K(t)-z_i^*(t)$ for each agent using agent-distributed CG-MPC when (a) $K=2$ and (b) $K=5$ respectively.}
\end{figure}

\section{Conclusion}\label{sec:conclusion}
Newton-type methods have been widely used in nonlinear optimization, and this paper shows that their effectiveness is retained in NE and GNE problems. In NE problems, the convergence of Josephy-Newton method and the ISS of the perturbed iterations were proven under the NE stability condition, weaker than the strong regularity condition typically used in nonlinear optimization, and weaker than the strictly monotone game condition typically used in the NE-seeking literature. For GNE problems, semismooth Newton methods were developed to solve the KKT systems. The ISS was proven under a quasi-regularity condition, also weaker than the strong regularity condition. 
A CG-MPC solver with time-distributed iterations was developed for real-time implementation. Boundness of tracking error was proven and was verified by numerical analysis.

\section*{Appendix}
\subsection{Proof of Theorem \ref{Th:stability}}
For every $\Tilde{\mathcal{G}}\in\pmb{\mathrm{B}}(\mathcal{G};\epsilon,\mathcal{A}\cap \mathsf{cl}\Omega)$, its local solvability (i.e., \eqref{eq:local_solvability}) follows from  \cite[Proposition 12.16]{pang_bookchapter}. 
The bounded solution deviation (i.e., \eqref{eq:error_def}) follows from \cite[Corollary 5.1.8]{book_pang} by noting that $F$ must be at least B-differentiable \cite[Definition 3.1.2]{book_pang} at $a^*$ under assumption A1.

\subsection{Proof of Theorem \ref{tm:stability_NE}}
Define the error function $\mathbf{e}_F(a)$ as the residual of the first-order Taylor expansion of $F$ at $a$ relative to $a^*$:
\begin{equation}\label{eq:residual_error}
    \mathbf{e}_F(a)=F(a)-F(a^*)-H(a^*)(a-a^*).
\end{equation}
If $H$ is Lipschitz continuous in a neighborhood of $a^*$, then according to \cite[Proposition 7.2.9]{book_pang},
\begin{equation}\label{eq:q_qua}
    \lim_{a^*\neq a\rightarrow a^*}\sup\frac{\|\mathbf{e}_F(a)\|}{\|a-a^*\|^2}<\infty.
\end{equation}

Denote $F^k(a)=F(a^k)+H(a^k)(a-a^k)$ and $F^*(a)=F(a^*)+H(a^*)(a-a^*)$.
By the continuous differentiability of $F$ near $a^*$, the affine function $F^*$ is a strong first-order approximation (FOA) of $F$ at $a^*$ \cite[Section 5.2.2]{book_pang}. Because $a^*$ is a stable NE of the game $\mathcal{G}(\mathcal{A},J)$, it is a stable solution to VI$(\mathcal{A},F)$. Because $F^*$ is a strong FOA of $F$ at $a^*$, $a^*$ is also a stable solution to VI$(\mathcal{A},F^*)$ \cite[Proposition 5.2.15]{book_pang}. 
 The remaining proof is then analogous to \cite[Theorem 7.3.5]{book_pang} by noting that \eqref{eq:JN} is equivalent to solving VI$(\mathcal{A},F^k)$, and that \eqref{eq:q_qua} results in
$
\lim_{\substack{k\rightarrow\infty\\a^k\neq a^*}}\sup\frac{\|a^{k+1}-a^{*}\|}{\|a^{k}-a^{*}\|^2}<\infty$, leading to $Q$-quadratic convergence.

\subsection{Proof of Theorem \ref{tm:ISS_pert_NE}}
According to \eqref{eq:JN_pertu}, $a^{k+1}$ is a solution to VI$(\mathcal{A},F^{v,k})$, where
\begin{equation}\label{eq:F^k}
    F^{v,k}(a)=F(a^k,v^k)+H(a^k,v^k)(a-a^k).
\end{equation}
Denote
\begin{align}
    &F^*(a)=F(a^*,0)+H(a^*,0)(a-a^*),\label{eq:F^*}\\
    &\mathbf{e}_{F}(a,v)=F(a,v)-F(a^*,0)-H(a^*,0)(a-a^*). \label{eq:residual_error_disturb}
\end{align}
Due to the stability of $a^*$, we have
    \begin{equation}\label{eq:akv-a*}
    \begin{split}
        \|a^{k+1}-a^*\|\leq c\|F^*(a^{k+1})-F^{v,k}(a^{k+1})\|.
        \end{split}
    \end{equation}
    
Substituting \eqref{eq:F^k}, \eqref{eq:F^*}, and \eqref{eq:residual_error_disturb} into \eqref{eq:akv-a*}, we have
\begin{equation}\label{eq:37_1}
  \begin{split}
    &\|a^{k+1}-a^*\|\\
    &\leq c\|H(a^*,0)-H(a^k,v^k)\|\|a^{k+1}-a^k\|+\| \mathbf{e}_{F}(a^k,v^k)\|.
\end{split}  
\end{equation}
Furthermore, by substituting $
    \|a^{k+1}-a^k\|\leq \|a^{k+1}-a^*\|+\|a^{k}-a^*\|$
 into \eqref{eq:37_1} and by noting that $\|H(a^*,0)-H(a^k,v^k)\|$ and $\|\mathbf{e}_F(a^k,0)\|$ can be made arbitrarily small by shrinking $\delta_a$ and $\delta_v$ if necessary, there exist $L_a<1$ and $L_v<\infty$ such that
\begin{small}
\begin{equation}\label{eq:37}
  \begin{split}
    &\|a^{k+1}-a^*\|\\
    &\leq \frac{c\left(\|H(a^*,0)-H(a^k,v^k)\|\|a^{k}-a^*\|+ \|\mathbf{e}_F(a^k,v^k)\| \right)}{1-c\left[\|H(a^*,0)-H(a^k,v^k)\| \right]}\\
    &\leq \frac{c\left(\|H(a^*,0)-H(a^k,v^k)\|\|a^{k}-a^*\|+ \|\mathbf{e}_F(a^k,0)\|+L_{v}^F\|v^k\| \right)}{1-c\left[\|H(a^*,0)-H(a^k,v^k)\| \right]}\\
        &\leq L_a\|a^{k}-a^*\|+L_{v}\|v^k\|,
\end{split}  
\end{equation}
\end{small}
where $L_{v}^F$ is the locally Lipschitz modulus of $F$ with respect to $v$ at $(a^k,v^k)$. 


\subsection{Proof of Corollary \ref{C1}}
    According to \eqref{eq:37} and the Lipschitz continuity of $F$ and $H$ with respect to $(a,v)$,  there exist $L_H$ and $L'_v$ such that
 \begin{small}
\begin{equation}\label{eq:C1_1}
  \begin{split}
    &\|a^{k+1}-a^*\|\\
        &\leq c\left(\|H(a^*,0)-H(a^k,v^k)\|\|a^{k}-a^*\|+ \|\mathbf{e}_F(a^k,0)\|+L_{v}^F\|v^k\| \right)\\
        &\leq c\|\mathbf{e}_F(a^k,0)\|+cL_H \|a^{k}-a^*\|^2+L'_v\|v^k\|.
\end{split}  
\end{equation}
\end{small}
    Combining  \eqref{eq:q_qua} and \eqref{eq:C1_1}, there must exists a constant $L'_a$ such that \eqref{eq:ISS_perturbed_Newton_Q} holds.

\subsection{Proof of Theorem \ref{convergence_dis_local}}
 Eq. \eqref{eq:distributed_JN} is equivalent to solving VI$(\mathcal{A}_i,F^k_{i,{\Bar{a}_{-i}}})$, where $
    F^k_{i,{\Bar{a}_{-i}}}(a_i)=\nabla_{a_i}\Bar{J}_i(a_i^k,\Bar{a}_{-i})+\nabla^2_{a_ia_i}\Bar{J}_i(a_i^k,\Bar{a}_{-i})(a_i-a_i^k)$.
Denote $
    F^*_{i,{\Bar{a}_{-i}}}(a_i)=\nabla_{a_i}\Bar{J}_i(a_i^*,\Bar{a}_{-i})+\nabla^2_{a_ia_i}\Bar{J}_i(a_i^*,\Bar{a}_{-i})(a_i-a_i^*)$.
 
 Because $\nabla^2_{a_ia_i}\Bar{J}_i(\Bar{a}_i^*,\Bar{a}_{-i})$ is strictly copositive on the critical cone $\mathcal{C}_i(\Bar{a}_i^*;\mathcal{A}_i,F^*_{i,{\Bar{a}_{-i}}})$, $\Bar{a}_i^*$ is a stable solution to VI$(\mathcal{A}_i,F^*_{i,{\Bar{a}_{-i}}})$ \cite[Corollary 5.1.8]{book_pang}. The convergence of $\{a_i^k\}$ to $\Bar{a}_i^*$ then follows from a similar argument as in Theorem \ref{tm:stability_NE}. The convergence of the proximal response to a NE under the monotone $F$  follows from \cite[Theorem 16]{best_response}. 

\subsection{Proof of Theorem \ref{thm:ISS_semi}}
Denote $d^k=z^{k+1}-z^k$ and  $\mathbf{G}(z^k,d^k)=J\Phi(z^k)d^k$, $z^{k+1}$ generated from \eqref{eq:inexact_seminewton} satisfies
\begin{equation}
    \Phi(z^{k})+\mathbf{G}(z^k,d^k)=r^k.
\end{equation}

Because $\Phi$ in \eqref{system_equations} is strongly semismooth everywhere, according to \cite[Theorem 7.5.3]{book_pang}, $\text{Jac }\Phi(z)$ defines a nonsingular strong  Newton approximation scheme for $\Phi$ at $z^*$.

By the uniform Lipschitz homeomorphism property of $\mathbf{G}^{-1}(z,\cdot)$ with constant $L_{G^{-1}}$, we have
\begin{equation}\label{proof1_eq_1}
    \begin{split}
        \|z^{k+1}-z^*\|&=\|z^{k}+d^k-z^*\|\\
        &=\|z^{k}-z^*+\mathbf{G}^{-1}(z^k,-\Phi(z^k)+r^k)\|\\
        &=\|\mathbf{G}^{-1}(z^k,-\Phi(z^k)+r^k)\\
        &\quad-\mathbf{G}^{-1}(z^k,\mathbf{G}(z^k,z^*-z^{k}))\|\\
        &\leq L_{G^{-1}}\| \Phi(z^k)-r^k+\mathbf{G}(z^k,z^*-z^{k})\|\\
        &\leq L_{G^{-1}}(\|\Phi(z^k)+\mathbf{G}(z^k,z^*-z^{k})\|+\|r^k\|).
    \end{split}
\end{equation}
Because $\text{Jac }\Phi(z)$ is a nonsingular strong  Newton approximation scheme for $\Phi$ at $z^*$,
\begin{equation}\label{eq:quadratic}
        \frac{\|\Phi(z)+\mathbf{G}(z,z^*-z)\|}{\|z-z^*\|^2}\leq L_G.
    \end{equation}
Substituting \eqref{eq:quadratic} into the last inequality in \eqref{proof1_eq_1}, we have that
\begin{equation}
    \begin{split}
        \|z^{k+1}-z^*\|
        &\leq L_z\|z^k-z^*\|^2+L_r\|r^k\|,
    \end{split}
\end{equation}
holds for some constants $L_z$  and $L_r$.

\subsection{Proof of Theorem \ref{thm:bounded_error}}
    The proof is analogous to the proof of Lemma 1 in \cite{liao2020time} by noting that $\mathcal{T}$ from the developed Newton methods is at least $Q$-quadratically convergent and is locally ISS to disturbances. 

\bibliography{reference}

@article{GNEP,
  title={Generalized {Nash} equilibrium problems},
  author={Facchinei, Francisco and Kanzow, Christian},
  journal={Annals of Operations Research},
  volume={175},
  number={1},
  pages={177--211},
  year={2010},
  publisher={Springer}
}

@article{GNEP_KKT,
  title={On the solution of the {KKT} conditions of generalized {Nash} equilibrium problems},
  author={Dreves, Axel and Facchinei, Francisco and Kanzow, Christian and Sagratella, Simone},
  journal={SIAM Journal on Optimization},
  volume={21},
  number={3},
  pages={1082--1108},
  year={2011},
  publisher={SIAM}
}

@article{GPG_def,
  title={Decomposition algorithms for generalized potential games},
  author={Facchinei, Francisco and Piccialli, Veronica and Sciandrone, Marco},
  journal={Computational Optimization and Applications},
  volume={50},
  pages={237--262},
  year={2011},
  publisher={Springer}
}

@article{dominic_2,
  title={Online feedback equilibrium seeking},
  author={Belgioioso, Giuseppe and Liao-McPherson, Dominic and de Badyn, Mathias Hudoba and Bolognani, Saverio and Smith, Roy S and Lygeros, John and D{\"o}rfler, Florian},
  journal={IEEE Transactions on Automatic Control},
  year={2024},
  publisher={IEEE}
}

@article{Lacra_dissipativity,
  title={Dissipativity theory in game theory: On the role of dissipativity and passivity in {Nash} equilibrium seeking},
  author={Pavel, Lacra},
  journal={IEEE Control Systems Magazine},
  volume={42},
  number={3},
  pages={150--164},
  year={2022},
  publisher={IEEE}
}

@book{dontchev2009implicit,
  title={Implicit functions and solution mappings},
  author={Dontchev, Asen L and Rockafellar, R Tyrrell},
  volume={543},
  year={2009},
  publisher={Springer}
}

@article{NE_existence,
  title={On the existence of pure and mixed strategy {Nash} equilibria in discontinuous games},
  author={Reny, Philip J},
  journal={Econometrica},
  volume={67},
  number={5},
  pages={1029--1056},
  year={1999},
  publisher={Wiley Online Library}
}

@article{CGMPC_application2,
  title={Game theoretic model predictive control for distributed energy demand-side management},
  author={Stephens, Edward R and Smith, David B and Mahanti, Anirban},
  journal={IEEE Transactions on Smart Grid},
  volume={6},
  number={3},
  pages={1394--1402},
  year={2014},
  publisher={IEEE}
}

@article{GNEP_newton,
  title={Generalized {Nash} equilibrium problems and {Newton} methods},
  author={Facchinei, Francisco and Fischer, Andreas and Piccialli, Veronica},
  journal={Mathematical Programming},
  volume={117},
  number={1},
  pages={163--194},
  year={2009},
  publisher={Springer}
}

@article{Lacra_distributed2,
  title={Distributed {GNE} seeking under partial-decision information over networks via a doubly-augmented operator splitting approach},
  author={Pavel, Lacra},
  journal={IEEE Transactions on Automatic Control},
  volume={65},
  number={4},
  pages={1584--1597},
  year={2019},
  publisher={IEEE}
}

@book{dontchev2021lectures,
  title={Lectures on variational analysis},
  author={Dontchev, Asen L},
  volume={205},
  year={2021},
  publisher={Springer}
}

@article{pang_bookchapter,
  title={{Nash} equilibria: The variational approach},
  author={Facchinei, Francisco and Pang, Jong-Shi},
  journal={Convex optimization in signal processing and communications},
  pages={443},
  year={2010}
}

@book{book_pang,
  title={Finite-dimensional variational inequalities and complementarity problems},
  author={Facchinei, Francisco and Pang, Jong-Shi},
  year={2003},
  publisher={Springer}
}

@article{liao2020time,

  title={Time-distributed optimization for real-time model predictive control: Stability, robustness, and constraint satisfaction},

  author={Liao-McPherson, Dominic and Nicotra, Marco M and Kolmanovsky, Ilya},

  journal={Automatica},

  volume={117},

  pages={108973},

  year={2020},

  publisher={Pergamon}

}

@article{cunis2024input,
  title={Input-to-State Stability of {Newton} Methods for Generalized Equations in Nonlinear Optimization},
  author={Cunis, Torbj{\o}rn and Kolmanovsky, Ilya},
  journal={Proceedings of IEEE Conference on Decision and Control},
  year={2024}
}

@article{nash1950equilibrium,
  title={Equilibrium points in n-person games},
  author={{Nash} Jr, John F},
  journal={Proceedings of the National Academy of Sciences},
  volume={36},
  number={1},
  pages={48--49},
  year={1950},
  publisher={National Acad Sciences}
}

@phdthesis{josephy_1,
  title={Quasi-{Newton} methods for generalized equations},
  author={Josephy, Norman},
  year={1979},
  school={University of Wisconsin}
}

@article{dontchev2013convergence,
  title={Convergence of inexact {Newton} methods for generalized equations},
  author={Dontchev, Asen L and Rockafellar, R Tyrrell},
  journal={Mathematical Programming},
  volume={139},
  number={1},
  pages={115--137},
  year={2013},
  publisher={Springer}
}

@article{quasirefular_vs_stronglyregular,
  title={On the accurate identification of active constraints},
  author={Facchinei, Francisco and Fischer, Andreas and Kanzow, Christian},
  journal={SIAM Journal on Optimization},
  volume={9},
  number={1},
  pages={14--32},
  year={1998},
  publisher={}
}

@article{best_response,
  title={Real and complex monotone communication games},
  author={Scutari, Gesualdo and Facchinei, Francisco and Pang, Jong-Shi and Palomar, Daniel P},
  journal={IEEE Transactions on Information Theory},
  volume={60},
  number={7},
  pages={4197--4231},
  year={2014},
  publisher={IEEE}
}
\bibliographystyle{IEEEtran}

\end{document}